\newtheorem{thm}{Theorem}
\newtheorem{lem}{Lemma}
\newcommand\V[1]{\texttt{\textit{#1}}}
\global\long\def\cinferlt#1#2#3{\inferrule*[lab=#1]{#2}{#3}}
\global\long\def\icr{\icr}
\global\long\def\fv{\mathop{\text{\rm fv}}}
\newcommand\textkw[1]{\textbf{#1}}
\newcommand\kw[2][\relax]{\textkw{#2$#1\relax$}\;}
\newcommand\ikw[2][\relax]{\;\textkw{#2$#1\relax$}\;}
\global\long\def\esub#1#2{\left\{  #1/#2\right\}  }
\global\long\def\ein{\ikw{in}}
\global\long\def\eletp#1#2#3{\kw{letp}\left(#1,#2\right)=#3}
\global\long\def\eapp#1#2{\mathop{#1}#2}
\global\long\def\elam#1#2{\mathbf{\lambda}#1.\,#2}
\global\long\def\eqlam#1#2#3{\mathbf{\lambda}^{#1}#2.\,#3}
\global\long\def\etlam#1#2#3{\Lambda#1\left[#2\right].\,#3}
\global\long\def\etapp#1#2{#1\left[#2\right]}
\global\long\def\epair#1#2{\left(#1,#2\right)}
\global\long\def\einl#1{\kw{Inl}#1}
\global\long\def\einr#1{\kw{Inr}#1}
\global\long\def\ematch#1#2#3#4#5{\kw{case}#1\ikw{of}\einl{#2}\rightarrow#3;\,\einr{#4}\rightarrow#5}
\global\long\def\eqnew#1#2{\kw[^{#1}]{new}#2}
\global\long\def\eqget#1#2{\kw[^{#1}]{release}#2}
\global\long\def\eqswp#1#2#3{\kw[^{#1}]{swap}#2\ikw{with}#3}
\global\long\def\elab{\ell}
\global\long\def\edup#1#2#3{\kw{dup}#1\ikw{as}#2,#3}
\global\long\def\edrop#1{\kw{drop}#1}
\global\long\def\eunit{()}
\global\long\def\eaq{\ensuremath{\mathit{aq}}}
\global\long\def\erq{\ensuremath{\mathit{rq}}}
\global\long\def\tW{\textsf{w}}
\global\long\def\tS{\textsf{s}}
\global\long\def\ehole{\left[\cdot\right]}
\global\long\def\tcdup{\mathop{\textsf{\upshape Dup}}}
\global\long\def\tcdrop{\mathop{\textsf{\upshape Drop}}}
\global\long\def\tck{K{}}
\global\long\def\tcentail{\Vdash}
\global\long\def\tcenv{\Gamma^{\text{\rm is}}}
\global\long\def\sccompat{\mathrel{\smile}}
\global\long\def\tmap{\mkern.5mu{:}\mkern.5mu}
\global\long\def\ltqmap#1#2{\mathrel{\mapsto}_{#1}^{#2}}
\global\long\def\ltwmap#1{\ltqmap {\tW}{#1}}
\global\long\def\ltsmap{\mathrel{\mapsto}_{\tS}}
\global\long\def\tje#1#2#3#4#5{#1;#2;#3\vdash#4\mathrel:#5}
\global\long\def\tqarrow#1#2#3{#1\overset{#3}{\longrightarrow}#2}
\global\long\def\tU{\textsf{U}}
\global\long\def\tR{\textsf{R}}
\global\long\def\tA{\textsf{A}}
\global\long\def\tL{\textsf{L}}
\global\long\def\tprod#1#2{#1\times#2}
\global\long\def\tsum#1#2{#1+#2}
\global\long\def\tqref#1#2{\textsf{Ref}^{\text{\,#1}}\,#2}
\global\long\def\tsref#1{\tqref {\tS}{#1}}
\global\long\def\twref#1{\tqref {\tW}{#1}}
\global\long\def\tunit{\textsf{Unit}}
\global\long\def\tsub#1#2{\left\{  #1/#2\right\}  }
\global\long\def\dom{\mathop{\text{\rm dom}}}
\global\long\def\tsch#1#2#3{\forall#1.#2\Rightarrow#3}
\global\long\def\preds{P}
\global\long\def\alphais{\overline{\alpha_{i}}}
\global\long\def\tjes#1#2#3#4{\tje{\preds}{#1}{#2}{#3}{#4}}
\global\long\def\nmstate#1#2{(#1\ ;\ #2)}
\global\long\def\nmtrans{\ \longmapsto\ }
\global\long\def\lqmap#1{\mathrel{\mapsto}^{#1}}
\global\long\def\tjs#1#2#3{#1\vdash_{\!\!s}#2\mathrel:#3}
\global\long\def\tjc#1#2#3{\vdash_{\!\!c}\nmstate{#1}{#2}\mathrel:#3}
\global\long\def\floc{\mathop{\text{\rm locs}}}
\global\long\def\nmincr#1#2{\mathop{\text{\rm incr}}(#1;#2)}
\global\long\def\nmdecr#1#2{\mathop{\text{\rm decr}}(#1;#2)}
\global\long\def\cll{\hat\lambda_{\mathit{lin}}}
\global\long\def\clam#1#2{\mathbf{\lambda}#1.#2}
\global\long\def\cpair#1#2{#1 \otimes #2}
\global\long\def\cchoice#1#2{#1 \mathbin{\&} #2}
\global\long\def\cdup#1#2{\mathbf{dup}\ #1\ \mathbf{in}\ #2}
\global\long\def\cdrop#1#2{\mathbf{drop}\ #1\ \mathbf{in}\ #2}
\global\long\def\evp{+}
\global\long\def\evsub{\sqsubseteq}
\global\long\def\cwf#1#2{#1\vdash#2}
\global\long\def\erase{\mathop{\text{\rm erase}}}
\global\long\def\inferf{\mathop{\text{\rm infer}}}
\global\long\def\alms{^{a}\lambda_{ms}}
\global\long\def\lcl{\lambda_{\mathit{cl}}}
\global\long\def\lrural{\lambda^{\textsf{refURAL}}}
\global\long\def\Constrain{{\text{\rm Constrain}}}
\newcommand\rulename[1]{\textsc{#1}}
\newcommand\noopsort[1]{\relax}
\title{Type Classes for Lightweight Substructural Types}
\author{
  Edward Gan
  \institute{Facebook, Menlo Park}
  \email{edgan8@gmail.com}
  \and
  Jesse A.~Tov
  \institute{Northeastern University, Boston}
  \email{tov@ccs.neu.edu}
  \and
  Greg Morrisett
  \institute{Harvard University, Cambridge}
  \email{greg@eecs.harvard.edu}
}
\begin{document}

\maketitle

\begin{abstract} 

Linear and substructural types are powerful tools, but adding them to
standard functional programming languages often means introducing extra
annotations and typing machinery. We propose a lightweight substructural
type system design that recasts the structural rules of weakening and
contraction as type classes; we demonstrate this design in a prototype
language, Clamp.

Clamp supports polymorphic substructural types as well as an expressive
system of mutable references. At the same time, it adds little
additional overhead
to a standard Damas--Hindley--Milner
type system enriched with type classes. We have
established type safety for the core model and implemented a type
checker with type inference in Haskell.

\end{abstract}

\section{Introduction} 

Type classes \cite{qualtypes,wadleradhoc} provide a way to constrain
types by the operations they support. If the type class
predicate $\tcdup \alpha$ indicates when assumptions of type $\alpha$ are
subject to contraction (duplication), and $\tcdrop \alpha$ indicates
whether they are
subject to weakening (dropping), then linear, relevant, affine, and
unlimited typing disciplines are all enforced by
some subset of these classes. 
Linear types, then, are types that satisfy neither $\tcdup$ nor $\tcdrop$.
This idea, suggested in one author's
dissertation~\cite{tovthesis}, forms the basis of
our prototype substructural programming language Clamp.

Clamp programs are written in a Haskell-like external language in which
weakening and contraction are implicit. This is easier for programmers to work
with, but to specify the type system and semantics the external language is
elaborated into an internal language that is linear (i.e. variables are used
exactly once.) The internal language provides explicit \emph{dup} and
\emph{drop} operations, which impose the corresponding type class constraints
on their arguments.
Thus, in the internal language one might think of
\emph{dup} and
\emph{drop} as functions with these qualified types:
\begin{align*}
\textit{dup} :&\; \tsch{\alpha}{\tcdup \alpha}
  {\alpha \rightarrow \tprod{\alpha}{\alpha}}\\
\textit{drop} :&\; \tsch{\alpha\beta}{\tcdrop \alpha}
  {\alpha \rightarrow \beta \rightarrow \beta}
\end{align*}

In the internal language all nonlinear usage is mediated by
the {dup} and {drop} operations. For example,
the internal language term $\lambda x .\, x+x$ is ill formed
because it uses variable $x$ twice, but the term
$$
\lambda x .\,\textbf{let } (x_1, x_2) = \textit{dup } x \textbf{ in } x_1+x_2
$$
is well typed. Because elaboration into
the internal language ensures that the resulting
program is linear, it can then be checked using nearly-standard
Damas--Hindley--Milner type reconstruction \cite{damas82} with type
classes~\cite{qualtypes,wadleradhoc};
improper duplication and dropping is indicated by unsatisfiable type class
constraints.

\paragraph{Contributions.}
We believe that Clamp offers substructural types with less
fuss than many
prior approaches to programmer-facing substructural type systems.
Throughout the design, we leverage standard type class machinery
to deal with most of the constraints imposed by substructural types.
Implementing type inference for Clamp (\S\ref{sec:implementing})
is straightforward and it is also easy to extend the system with custom
resource aware structures (\S\ref{sec:instances}).
The specific contributions in this paper include:

\begin{itemize}

\item a type system design with polymorphic substructural types and a
type safety theorem~(\S\ref{sec:formalizing});

\item a flexible system for managing weak and strong references
(\S\ref{sec:instances});

\item a type checker with type inference derived from a type checker for
Haskell (\S\ref{sec:implementing}); and

\item a {dup}-and-{drop}--insertion algorithm that is in some
sense optimal (\S\ref{sec:inferring}).

\end{itemize} 

\begin{figure}
\begin{code}
/V{fst}    :: Drop /V{b} => (/V{a}, /V{b}) -U> /V{a}
/V{fst}     = /K\(/V{x}, /V{y}) -U> /V{x}

/V{constU} :: (Dup /V{a}, Drop /V{a}, Drop /V{b}) => /V{a} -U> /V{b} -U> /V{a}
/V{constU}  = /K\/V{x} -U> /K\/V{y} -U> /V{x}

/V{constL} :: Drop /V{b} => /V{a} -U> /V{b} -L> /V{a}
/V{constL}  = /K\/V{x} -U> \/V{y} -L> /V{x}
\end{code}
\caption{Prelude functions with inferred signatures}
\label{fig:prelude}
\end{figure}

\subsection{Clamp Basics \label{sec:overview}}

In this section, we introduce the Clamp external language, in which
{dup} and {drop} operations are implicit. The concrete syntax
is borrowed from Haskell, but one prominent difference in Clamp
is that each function type and term must be annotated
with one of four substructural qualifiers: |U| for unlimited, |R|
for relevant, |A| for affine, or |L| for linear.

Three examples of Clamp functions, translated from the Haskell standard
prelude, appear in figure~\ref{fig:prelude}. Their types need not be
written explicitly, and are inferred by Clamp's type checker.

Consider the |/V{fst}| function, which projects the first component of
a pair. Because we would like be able to use
library functions any number of times or not at all, we annotate the
arrow in the lambda expression with qualifier~|U|. This annotation
determines the function type's structural properties---meaning, in this case,
that \V{fst} satisfies both $\tcdup$ and $\tcdrop$. (Note that this is a property of the
function \emph{itself}, not of how it treats its argument.) Because \V{fst}
does not use the second component of the pair, this induces the
|Drop /V{b}| constraint on type variable \V{b}. In particular,
elaboration into the internal language inserts a \V{drop} operation
for \V{y} to make the term linear:
  |/K\(/V{x}, /V{y}) -U> /V{drop} /V{y} /V{x}|.
  The presence of \V{drop}, which disposes of its first argument
and returns its second, causes the |Drop| type class constraint to be
inferred.

Function \V{constU} imposes a similar constraint on its second
argument, but it also requires the type of its first argument be unlimited. This is
because \V{constU} returns an unlimited closure containing the first
argument in its environment. The argument is effectively duplicated or
discarded along with the closure, so it inherits the same structural
restrictions. Alternatively, we can lift this restriction with
\V{constL}, which returns a linear closure and thus allows the first
argument to be linear.


\section{Formalizing $\lcl$ \label{sec:formalizing}}

To validate the soundness of our approach, we have developed~$\lcl$, a core
model of the Clamp internal language. $\lcl$ is based on System~F~\cite{girardF}
with a few modifications:
variable bindings are treated linearly, arrows are annotated with
qualifiers, and type class
constraints~\cite{qualtypes,wadleradhoc} are added under universal quantifiers. 
As an example of how one can define custom usage-aware datatypes in Clamp,
$\lcl$ also includes a variety of operations for working with mutable
references. 

The $\lcl$~type system shares many similarities with Tov's
core Alms calculus: $\alms$~\cite{practical-affine}.
Unlike the external Clamp language prototype~(\S\ref{sec:implementing}),
$\lcl$ provides first-class polymorphism and does not support type
inference.

\begin{figure}
\begin{plstx}
  (terms): e
  ::= x
    | v
    | \eapp{e_{1}}{e_{2}}
    | \etapp e{\overline{\tau_{i}}}
    | \epair{e_{1}}{e_{2}}
    | \eletp{x_{1}}{x_{2}}e\ein e_{2}
    | \einl e
    | \einr e
    | \ematch e{x_{1}}{e_{1}}{x_{2}}{e_{2}}
    | \eqnew{\erq}e
    | \eqget{\erq}e
    | \eqswp{\erq}{e_{1}}{e_{2}}
    | \edup{e_{1}}{x_{1}}{x_{2}\ein e_{2}}
    | \edrop{e_{1}}\ein e_{2} \\
  (values): v
  ::=\eqlam{\eaq}{x{:}\tau}e
    | \etlam{\alphais}{\preds}v
    | \epair{v_{1}}{v_{2}}
    | \einl v
    | \einr v
    | \elab
    | \eunit \\
  (types): \tau
  ::=\alpha
    | \tqarrow{\tau_{1}}{\tau_{2}}{\eaq}
    | \tprod{\tau_{1}}{\tau_{2}}
    | \tsum{\tau_{1}}{\tau_{2}}
    | \tunit
    | \tqref{\erq}{\tau}
    | \tsch{\overline{\alpha_{i}}}{\preds}{\tau}
    \\[4pt]
  (constraints):
  \preds ::= (\tck_1 \tau_1, \dots, \tck_n \tau_n) \\
  (reference qualifiers):
  \erq ::=\tS\;\textit{(strong)}
    | \tW\;\textit{(weak)}\\
  (arrow qualifiers):
  \eaq ::=\tU\;\textit{(unlimited)}
    | \tR\;\textit{(relevant)}
    | \tA\;\textit{(affine)}
    | \tL\;\textit{(linear)}\\
  (predicate constructors):
  K ::=\tcdup | \tcdrop \\
\end{plstx}
\caption{Syntax of $\lcl$ \label{fig:lcl-Syntax}}
\end{figure}

\subsection{Syntax of $\lcl$}

The syntax of~$\lcl$ appears in figure~\ref{fig:lcl-Syntax}. Most of the
language is standard, but notably arrow types and  $\lambda$~terms in Clamp
are annotated with an \emph{arrow qualifier} ($\eaq$). These annotations
determine which structural operations a function supports, as well as the
corresponding constraints imposed on the types in its closure environment.
Unlike some presentations of linear logic, $\tqarrow{}{}{\eaq}$  here
constrains usage of the function itself, not usage of the function's argument.
Thus one can call {dup} on a $\tqarrow{}{}{\tU}$ arrow but not on an
$\tqarrow{}{}{\tL}$ arrow. Type abstractions specify the type class constraints
that they abstract over; their bodies are restricted to values, so unlike
$\lambda$ terms, type abstractions do not need an arrow qualifier.

The $\eqnew{\erq}e$ and $\eqget{\erq}e$ forms introduce and eliminate mutable
references. Each comes in two flavors depending on its  \emph{reference
qualifier} ($\erq$), which records whether the reference supports strong or
merely weak updates. Weak (conventional) updates must preserve a reference
cell's type, but strong updates can modify both the value and type of a cell. 

Form $\eqswp{\erq}{e_{1}}{e_{2}}$ provides linear access
to a reference by exchanging its contents for a different value. The
$\mathbf{release}$ operator deallocates a cell and returns its contents if it
is not aliased. Store locations ($\elab$) appear at run time but are not
written by the programmer.

To incorporate type classes, universal types may include constraints on their type
variables. A constraint $\preds$ denotes a set of atomic predicate constraints
$\tck \tau$, each of which is a predicate constructor $\tck$ applied to a type.
For the sake of our current analysis, $K$ is either $\tcdup$ or $\tcdrop$.

\subsection{Semantics}

\begin{figure}
\begin{plstx}
  (evaluation contexts):
  E ::=\ehole
    | \eapp Ee
    | \eapp vE
    | \etapp E{\overline{\tau_{i}}}
    | \epair Ee
    | \epair vE
    | \einl E\mid\einr E
    | \ematch E{x_{1}}{e_{1}}{x_{2}}{e_{2}}
    | \eletp{x_{1}}{x_{2}}E\ein e
    | \eqnew{\erq}E\mid\eqget{\erq}E
    | \eqswp{\erq}Ee\mid\eqswp{\erq}vE
    | \edup E{x_{1}}{x_{2}}\ein e\mid\edrop E\ein e\\
  (stores):
  \mu ::=\elab\lqmap iv,\mu | \cdot \\
\end{plstx}

\caption{Runtime structures\label{fig:Runtime-Structures}}
\end{figure}

\begin{figure}
\[
\begin{aligned}
  \nmstate{\mu}{\eapp{\left(\elam{x{:}\tau}e\right)}v} 
    & \nmtrans\nmstate{\mu}{\esub vxe}\\
  \nmstate{\mu}{\etapp{\left(\etlam{\alphais}{\preds}v\right)}
    {\overline{\tau_{i}}}} 
    & \nmtrans\nmstate{\mu}{\overline{\tsub{\tau_{i}}{\alpha_{i}}}v} 
    \\
  \nmstate{\mu}{\eqnew{\erq}{v}}
    & \nmtrans\nmstate{\mu,\elab\lqmap {1}v}{\elab}
    \qquad \text{$\elab$ fresh}
    \\
  \nmstate{\mu,\elab\lqmap iv_{1}}{\eqswp{\erq}{\elab}{v_{2}}} 
    & \nmtrans\nmstate{\mu,\elab\lqmap iv_{2}}{\epair{\elab}{v_{1}}} 
    \\
  \nmstate{\mu,\elab\lqmap 1v}{\eqget \tW{\elab}} 
    & \nmtrans\nmstate{\mu}{\einl v} 
    \\
  \nmstate{\mu,\elab\lqmap{i}v}{\eqget \tW{\elab}} 
    & \nmtrans\nmstate{\mu,\elab\lqmap{i-1}v}{\einr{\eunit}} 
    \qquad \text{when $i>1$}
    \\
  \nmstate{\mu,\elab\lqmap 1v}{\eqget \tS{\elab}} 
    & \nmtrans\nmstate{\mu}v
    \\
  \nmstate{\mu}{\edup v{x_{1}}{x_{2}}\ein e}
    & \nmtrans\nmstate{\nmincr{\floc\left(v\right)}{\mu}}
    {\esub v{x_{2}}\esub v{x_{1}}e} 
    \\
  \nmstate{\mu}{\edrop v\ein e} 
    & \nmtrans\nmstate{\nmdecr{\floc\left(v\right)}{\mu}}e 
    \\
  \nmstate{\mu_{1}}{E\left[e_{1}\right]} 
    & \nmtrans\nmstate{\mu_{2}}{E\left[e_{2}\right]}
    \qquad\text{when }
    \nmstate{\mu_{1}}{e_{1}}\nmtrans\nmstate{\mu_{2}}{e_{2}}
\end{aligned}
\]

\caption{Small-step relation \label{fig:Small-Step-Relation}}
\end{figure}

The execution of $\lcl$ terms can be defined by a call-by-value small-step
semantics with evaluation contexts and a global, reference-counted store
$\mu$. The run-time structures needed to define this small step relation are
given in figure~\ref{fig:Runtime-Structures}. Reference counts are used to
track when reference cells can be safely deallocated in the presence of
aliasing. Exchange properties for the store are implicitly assumed. A
selection of the small step relation rules are given in 
figure~\ref{fig:Small-Step-Relation}, focusing on the rules for reference 
cells and substructural operations. Most of the complexity here comes 
from the reference counts.

The $\mathbf{swap}$ operator exchanges the contents of a cell in the
heap with a different value.  The $\mathbf{release}$ operator deallocates a
cell and returns its contents if it is not aliased. However, if a cell
has been aliased it decrements the reference count and returns a unit. The dup
and drop operators manipulate reference counts as expected.

A few metafunctions given in figure~\ref{fig:refcount-Manage} are necessary to
maintain reference counts, similar to those in \cite{computational-linear}. The
functions $\nmincr{\elab}{\mu}$ and $\nmdecr{\elab}{\mu}$ allow us
to increment and decrement reference counts in the heap. Incrementing a
location is straightforward, but a decrement must be defined recursively since
deallocating the last pointer to a reference cell involves decrementing the
reference counts of all cells the deallocated contents originally pointed to.

The $\floc$ meta-function is a convenient way of extracting the multiset
of locations that a value uses. Note that the $+$ (or $\uplus$) operator
is a multiset operator which additively combines occurrences, and is used
again in section~\ref{sec:core}.  The $\floc$ function is also
designed to operate on well-typed terms, so it only looks at one branch
of a $\mathbf{case}$ expression, assuming that the other branch must
share the same location typing context.

\begin{figure}
\centering
\begin{minipage}{0.45\linewidth}
\[
\begin{aligned}
  \nmincr{\elab}{\elab\lqmap jv,\mu} 
    & =\elab\lqmap{j+1}v,\mu \\
  \nmdecr{\elab}{\elab\lqmap jv,\mu} 
    & =
      \begin{cases}
        \nmdecr{\floc\left(v\right)}{\mu}
            & \text{$j = 1$} \\
        \elab\lqmap{j-1}v,\mu
            & \text{$j > 1$}
      \end{cases} \\[4pt]
  \nmincr{\{\elab_1,\ldots,\elab_k\}}{\mu}
    & =\nmincr{\elab_1}{\cdots\nmincr{\elab_k}{\mu}} \\
  \nmdecr{\{\elab_1,\ldots,\elab_k\}}{\mu}
    & =\nmdecr{\elab_1}{\cdots\nmdecr{\elab_k}{\mu}}
\end{aligned}
\]
\end{minipage}
\qquad
\begin{minipage}{0.45\linewidth}
\[
\begin{aligned}
  \floc\left(\elab\right) & =\left\{ \elab\right\} \\
  \floc\left(\eqlam{\eaq}{x{:}\tau}e\right) 
    & =\floc\left(e\right) \\
  \floc\left(e_{1}\ e_{2}\right) 
    & =\floc\left(e_{1}\right)+\floc\left(e_{2}\right)\\
  \floc\left(
  \begin{array}{@{}l@{}}
        \mathbf{case\ }e\mathbf{\ of \ } \\
        \einl{x_1} \to e_1; \\
        \einr{x_2} \to e_2
  \end{array}
  \right)
    & =\floc\left(e\right)+\floc\left(e_{1}\right)\\
  \cdots
\end{aligned}
\]
\vspace{-1.5em}
\end{minipage}
\caption{Reference count management \label{fig:refcount-Manage}}

\end{figure}

\subsection{Term Typing}

Variable contexts $\Gamma::=x_{1}{:}\tau_{1},\dots,x_{n}{:}\tau_{n}$ 
associate variables with types,
where each variable appears at most once.
Location contexts (store typings) 
$\Sigma::=\elab_{s}\ltsmap\tau_{s},\dots,\elab_{w}\ltwmap{k_{w}}\tau_{w},\dots$ 
associate locations
($\elab$) with their reference types $\tqref{}{\tau}$, 
and distinguish between strong and weak
locations; weak locations carry a reference count $k$ to
track aliasing.

Linearity is enforced in $\lcl$ via standard context-splitting.
Because $\Gamma$ and $\Sigma$ are linear environments, we need operations to join them.
The join operation~$+$ is defined only on pairs of
compatible environments, written $\Gamma_{1}\sccompat\Gamma_{2}$ and
$\Sigma_{1}\sccompat\Sigma_{2}$. Two variable contexts are compatible so long
as they are disjoint. Two location
contexts are compatible if the strong locations are disjoint and the weak
locations in their intersection agree on their types. Joining variable contexts
appends the two sets of bindings together, while joining location contexts
also involves adding the reference counts of any shared weak locations.
Contexts are identified up to permutation.

\begin{figure}
\begin{mathpar}
\cinferlt{Var}{\ }{\tje{\preds}{x\tmap\tau}{\cdot}x{\tau}}
\and
\cinferlt{TAbs}
  {\tje{\preds_{1},\preds_{2}}
    {\Gamma}{\Sigma}v{\tau}\icr
    \dom\preds_{2}\subseteq\alphais}
  {\tje{\preds_{1}}{\Gamma}{\Sigma}
    {\etlam{\alphais}{\preds_{2}}v}{\tsch{\alphais}{\preds_{2}}{\tau}}}
\and
\cinferlt{TApp}
  {\tje{\preds_{1}}{\Gamma}{\Sigma}e{\tsch{\alphais}{\preds_{2}}{\tau}}
    \icr\preds_{1}\tcentail\overline{\tsub{\tau_{i}}{\alpha_{i}}}\preds_{2}
    }
  {\tje{\preds_{1}}{\Gamma}{\Sigma}{\etapp e{\overline{\tau_{i}}}}
    {\overline{\tsub{\tau_{i}}{\alpha_{i}}}\tau}}
\and
\cinferlt{Abs}
  {\tje{\preds}{\Gamma,x\tmap\tau_{1}}{\Sigma}e{\tau_{2}}\icr
    \preds\tcentail\Constrain^\eaq(\Gamma;\Sigma)}
  {\tje{\preds}{\Gamma}{\Sigma}{\eqlam{\eaq}{x\:{:}\:\tau_{1}}e}
    {\tqarrow{\tau_{1}}{\tau_{2}}{\eaq}}}
\and
\cinferlt{App}
  {\tje{\preds}{\Gamma_{1}}{\Sigma_{1}}{e_{1}}
    {\tqarrow{\tau_{2}}{\tau}{\eaq}}
    \icr\tje{\preds}{\Gamma_{2}}{\Sigma_{2}}{e_{2}}{\tau_{2}}}
  {\tje{\preds}{\Gamma_{1}+\Gamma_{2}}{\Sigma_{1}+\Sigma_{2}}
    {\eapp{e_{1}}{e_{2}}}{\tau}}
\and
\cinferlt{Case}
  {\tjes{\Gamma_{1}}{\Sigma_{1}}{e_{1}}{\tsum{\tau_{11}}
    {\tau_{12}}}
    \icr\tjes{\Gamma_{2},x_{21}\tmap\tau_{11}}
    {\Sigma_{2}}{e_{21}}{\tau_{2}}
    \icr\tjes{\Gamma_{2},x_{22}\tmap\tau_{12}}
    {\Sigma_{2}}{e_{22}}{\tau_{2}}}
  {\tje{\preds}{\Gamma_{1}+\Gamma_{2}}{\Sigma_{1}+\Sigma_{2}}
    {\ematch{e_{1}}{x_{21}}{e_{21}}{x_{22}}{e_{22}}}{\tau_{2}}}
\and
\cinferlt{Dup}
  {\tje{\preds}{\Gamma_{1}}{\Sigma_{1}}{e_{1}}{\tau_{1}}
    \icr\icr\tje{\preds}{\Gamma_{2},x_{1}\tmap\tau_{1},x_{2}\tmap\tau_{1}}
    {\Sigma_{2}}{e_{2}}{\tau_{2}}
    \icr\preds\tcentail\tcdup\tau_{1}}
  {\tje{\preds}{\Gamma_{1}+\Gamma_{2}}{\Sigma_{1}+\Sigma_{2}}
    {\edup{e_{1}}{x_{1}}{x_{2}}\ein e_{2}}{\tau_{2}}}
\and
\cinferlt{Drop}
  {\tje{\preds}{\Gamma_{1}}{\Sigma_{1}}{e_{1}}{\tau_{1}}
    \icr\icr\tje{\preds}{\Gamma_{2}}{\Sigma_{2}}{e_{2}}{\tau_{2}}
    \icr\preds\tcentail\tcdrop\tau_{1}}
  {\tje{\preds}{\Gamma_{1}+\Gamma_{2}}{\Sigma_{1}+\Sigma_{2}}
    {\edrop{e_{1}}\ein e_{2}}{\tau_{2}}}
\end{mathpar}
\vspace{-1em}
\caption{Selected $\lcl$ term typing rules \label{fig:lcl-Expression-Typing}}
\end{figure}

\begin{figure}
\begin{mathpar}
\and
\cinferlt{New}
  {\tje{\preds}{\Gamma}{\Sigma}e{\tau}}
  {\tje{\preds}{\Gamma}{\Sigma}{\eqnew{\erq}e}{\tqref{\erq}{\tau}}}
\and
\cinferlt{LocW}{\ }
  {\tje{\preds}{\cdot}{\elab\ltwmap 1\tau}{\elab}{\twref{\tau}}}
\and
\cinferlt{LocS}{\ }
  {\tje{\preds}{\cdot}{\elab\ltsmap\tau}{\elab}{\tsref{\tau}}}
\and
\cinferlt{ReleaseW}
  {\tje{\preds}{\Gamma}{\Sigma}e{\tqref{\erq}{\tau}}}
  {\tje{\preds}{\Gamma}{\Sigma}{\eqget we}{\tsum{\tunit}{\tau}}}
\and
\cinferlt{ReleaseS}
  {\tje{\preds}{\Gamma}{\Sigma}e{\tqref \tS{\tau}}}
  {\tje{\preds}{\Gamma}{\Sigma}{\eqget se}{\tau}}
\and
\cinferlt{SwapW}
  {\tje{\preds}{\Gamma_{1}}{\Sigma_{1}}{e_{1}}{\tqref{\erq}{\tau}}
    \icr\tje{\preds}{\Gamma_{2}}{\Sigma_{2}}{e_{2}}{\tau}}
  {\tje{\preds}{\Gamma_{1}+\Gamma_{2}}{\Sigma_{1}+\Sigma_{2}}
    {\eqswp \tW{e_{1}}{e_{2}}}{\tprod{\tqref{\erq}{\tau}}{\tau}}}
\and
\cinferlt{SwapS}
  {\tje{\preds}{\Gamma_{1}}{\Sigma_{1}}{e_{1}}{\tqref \tS{\tau_{1}}}
    \icr\tje{\preds}{\Gamma_{2}}{\Sigma_{2}}{e_{2}}{\tau_{2}}}
  {\tje{\preds}{\Gamma_{1}+\Gamma_{2}}{\Sigma_{1}+\Sigma_{2}}
    {\eqswp \tS{e_{1}}{e_{2}}}{\tprod{\tqref \tS{\tau_{2}}}{\tau_{1}}}}
\end{mathpar}
\vspace{-1em}
\caption{$\lcl$ term reference cell typing rules \label{fig:lcl-Reference-Typing}}
\end{figure}

The term typing judgment ($\tje{\preds}{\Gamma}{\Sigma}e{\tau}$) assigns
term~$e$ type~$\tau$ under constraint, variable, and location contexts
$\preds$, $\Gamma$, and $\Sigma$. Selected typing rules for the core
language appear in figure~\ref{fig:lcl-Expression-Typing}, and the typing
rules for reference cells are given in figure~\ref{fig:lcl-Reference-Typing}.
Consistency conditions $\Sigma_{1}\sccompat\Sigma_{2}$ and
$\Gamma_{1}\sccompat\Gamma_{2}$ are assumed whenever contexts are combined.
The core language typing rules split and share the linear contexts as needed,
but are otherwise a natural extension of System~F to support type class
constraints.

We impose a syntactic restriction, similar to Haskell~98's context reduction
restrictions~\cite{haskell98}, on the form of constraints in type
schemes introduced by the \rulename{TAbs} rule: type abstractions may only
constrain the type variables that they bind, and not compound or unrelated
types. This simplifies induction over typing derivations for \rulename{TAbs}
since it means that no constraints on external type variables
can be introduced by a type abstraction.
 Additionally, in rule \rulename{Abs}, the variable and location
contexts are constrained by the function's arrow qualifier, to ensure
that values captured by the closure support any structural operations
that might be applied to the closure itself; this constraint must be
entailed ($\tcentail$) by the constraint context. Here $\Constrain^\eaq(\Gamma;\Sigma)$
is shorthand for the appropriate set of Dup and Drop constraints
applied to every type mapped in $\Gamma$ and $\Sigma$, so that
for instance
$\Constrain^\tL$ imposes no constraints, while
$\Constrain^\tR$ imposes only $\tcdup$ constraints.

The $\textkw{dup}$ and $\textkw{drop}$ forms constrain the types of their parameters
in the expected way, by requiring their types to be members of the
$\tcdup$ or $\tcdrop$ type classes, respectively (again entailed by the
constraint context).

Since $\lcl$ supports both strong and weak references with different
substructural properties, there are a variety of typing rules governing their
usage. The \textbf{swap} operation needs to return both an updated reference
and the old contents, so it packages those in a pair. Weak and strong forms of
reference cell operations are provided, and it is safe to apply the weak
operations to both strong and weak references.  The $\eqget{\erq}e$ forms are
used to deallocate a reference cell and possibly retrieve its contents.
Notably, in the case of a weak reference, since the contents could be linear,
we preserve its linearity while allowing for aliasing by returning the
contents of the reference only when the last alias to the cell is released,
and unit otherwise.

\subsection{Type Class Instances\label{sec:instances}}

Throughout the type system, type class constraints are propagated via
entailment, $\preds_1 \tcentail \preds_2$, which specifies when one set of
type class predicates ($\preds_2$) is implied by another ($\preds_1$) in the
context of the fixed background instance environment~$\tcenv$. For example,
entailment allows our type system to derive that $\tprod{\tunit}{\tunit}$ is
duplicable because $\tunit$ is. Rules for entailment are given by
Jones~\cite{qualtypes} and adapt naturally to this setting. The substructural
essence of the type class system in Clamp is the set of base $\tcdup$ and
$\tcdrop$ instances $\tcenv$, which appears in figure~\ref{fig:instances}.

\begin{figure}
\begin{align*}
  (\tcdup \alpha_1,\tcdup \alpha_2) & 
    \Rightarrow\tcdup\,(\tprod {\alpha_1} {\alpha_2} ) &
  (\tcdrop \alpha_1,\tcdrop \alpha_2) & 
    \Rightarrow\tcdrop\,(\tprod {\alpha_1} {\alpha_2} )\\
  (\tcdup \alpha_1,\tcdup \alpha_2) & 
    \Rightarrow\tcdup\,(\tsum {\alpha_1} {\alpha_2} ) & 
  (\tcdrop \alpha_1,\tcdrop \alpha_2) 
    & \Rightarrow\tcdrop\,(\tsum {\alpha_1} {\alpha_2} )
\end{align*}
\vspace{-2em}
\begin{align*}
  () & \Rightarrow\tcdup\,(\tqarrow {\alpha_1} {\alpha_2} {\tU}) & 
  () & \Rightarrow\tcdrop\,(\tqarrow {\alpha_1} {\alpha_2} {\tU}) &
  () & \Rightarrow\tcdup\tunit & () & \Rightarrow\tcdrop\tunit \\
  () & \Rightarrow\tcdup\,(\tqarrow {\alpha_1} {\alpha_2} {\tR}) & 
  () & \Rightarrow\tcdrop\,(\tqarrow {\alpha_1} {\alpha_2} {\tA}) &
  () & \Rightarrow\tcdup\,(\twref \alpha ) & 
  (\tcdrop \alpha) & \Rightarrow\tcdrop(\tqref{\erq}\alpha )
\end{align*}
\vspace{-1em}
\caption{$\tcdup$ and $\tcdrop$ instances
  \label{fig:instances}}
\end{figure}

Since pairs and sums contain values
that might be copied or ignored along with the pair or sum value, their
instance rules require instances for their components. Functions impose
constraints on their closure environments when they are assigned a qualifier
during term typing, so the instance rules for arrows depend only on the arrow
qualifier.

Dealing correctly with references is more subtle, as seen in
$\lrural$~\cite{stepindexed}. In Clamp, some references support strong
updates, which can change not only the value but the type of a mutable
reference. However it is unsafe to alias a reference cell whose type might
change.

In $\lrural$, the restrictions on reference types are given in a sizable table,
but $\tcdup$ and $\tcdrop$ instances make it easy to express these restrictions in
Clamp. Clamp classifies references by the kind of updates they
support: strong or weak. This is specified by the $\erq$ qualifier 
in the $\tqref{\erq}$ type.

Qualitatively, the constraints we impose are that:

\begin{itemize}

\item Strong references may not be duplicated.

\item Only references with droppable contents may be dropped.

\item Only strong references support direct deallocation. 

\item Weak references can be deallocated, but only return their contents when unaliased.
  
\end{itemize}

The above four rules capture the same restrictions as $\lrural$ references.
They also increase the expressiveness of the system by explicitly
distinguishing weak and strong references and allowing for the deallocation of
weak references. They are expressed in $\lcl$ with two type class instances
and the typing judgments \rulename{ReleaseW} and \rulename{ReleaseS}.

As an example of the kinds of structures we can build using these rules, consider
the type 
$$\twref{\left(\text{fhandle}\right)}$$ 
for a linear file handle $\text{fhandle}$. This
weak reference can be aliased to provide shared access to the file handle, but
cannot be dropped based on the type class instances, since $\text{fhandle}$ cannot
be dropped. Anyone that uses this reference must release
the reference and close the file if necessary.

\subsection{Type Safety}

Here we sketch part of the type safety proof; more details may be found in
Gan's thesis~\cite{ganthesis}.

The bulk of the work goes into proving preservation, and the key lemma in
proving preservation relates constraints to bindings. Intuitively, this lemma
says that structural constraints on a value's type respect the structural
constraints of everything the value contains or points to, via the variable
and location contexts. Syntactic forms like $\tcdup\Gamma$ are used to denote
the set of $\tcdup$ constraints on all types in $\Gamma$, and similarly
$\tcdup\Sigma$ applies $\tcdup$ to all of the $\tqref{rq}{\tau}$ types mapped
by $\Sigma$.  Note that the lemma does not hold for arbitrary expressions.

\begin{lem}[Constraints capture bindings]
\label{lem:capture}
Suppose that $\tje{\preds}{\Gamma}{\Sigma}v{\tau}$. 
If $\preds\tcentail\tcdup\tau$
then \mbox{$\preds\tcentail(\tcdup\Sigma,\tcdup\Gamma)$};
if $\preds\tcentail\tcdrop\tau$
then $\preds\tcentail(\tcdrop\Sigma,\tcdrop\Gamma)$.
\end{lem}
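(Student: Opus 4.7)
\medskip

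\noindent\textbf{Proof plan.}
I would proceed by induction on the derivation of $\tje{\preds}{\Gamma}{\Sigma}v{\tau}$. Because $v$ is a value, the last rule used must introduce a value form: \rulename{Unit}, \rulename{LocW}, \rulename{LocS}, \rulename{Pair}, \rulename{Inl}, \rulename{Inr}, \rulename{Abs}, or \rulename{TAbs}. The two halves of the lemma are completely symmetric in their treatment of $\tcdup$ versus $\tcdrop$, so I would prove them together and only write out the $\tcdup$ case in detail.

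The base cases are immediate. For \rulename{Unit}, both $\Gamma$ and $\Sigma$ are empty, so $(\tcdup\Sigma,\tcdup\Gamma)$ is trivially entailed. For \rulename{LocW} and \rulename{LocS}, $\Gamma$ is empty and $\Sigma$ contains a single mapping whose associated reference type is exactly $\tau$, so $\tcdup\Sigma$ is literally $\tcdup\tau$ and the hypothesis delivers the conclusion. For \rulename{Pair}, inversion on the instance environment in figure~\ref{fig:instances} lets me split $\preds\tcentail\tcdup(\tprod{\tau_1}{\tau_2})$ into $\preds\tcentail\tcdup\tau_1$ and $\preds\tcentail\tcdup\tau_2$; the induction hypothesis then yields $\preds\tcentail(\tcdup\Sigma_i,\tcdup\Gamma_i)$ for each $i$, and these combine to cover $\Sigma_1+\Sigma_2$ and $\Gamma_1+\Gamma_2$. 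The \rulename{Inl} and \rulename{Inr} cases are analogous via the sum instance.

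The crux is the \rulename{Abs} case, $v=\eqlam{\eaq}{x{:}\tau_1}e$ with type $\tqarrow{\tau_1}{\tau_2}{\eaq}$. Here the typing rule has already done the real work, by supplying the premise $\preds\tcentail\Constrain^\eaq(\Gamma;\Sigma)$. When $\eaq\in\{\tU,\tR\}$, the definition of $\Constrain^\eaq$ already includes $(\tcdup\Gamma,\tcdup\Sigma)$, so the conclusion follows directly from transitivity of entailment. When $\eaq\in\{\tA,\tL\}$, the base instance environment does not produce $\tcdup$ of such an arrow; invoking the hypothesis $\preds\tcentail\tcdup(\tqarrow{\tau_1}{\tau_2}{\eaq})$ therefore forces $\preds$ to be inconsistent, in which case every constraint is entailed. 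The \rulename{TAbs} case is handled the same way: since figure~\ref{fig:instances} contains no instance for $\tcdup$ (or $\tcdrop$) at a universally quantified type, the hypothesis can only be established from an inconsistent $\preds$, and the conclusion is again vacuous.

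The main obstacle I anticipate is making the ``no alternative derivation'' argument in the \rulename{Abs} and \rulename{TAbs} cases fully rigorous. Concretely, this requires a small side lemma on the shape of entailment: given the fixed instance environment and the syntactic restriction on \rulename{TAbs} that $\dom\preds_2\subseteq\alphais$ (so that constraints never escape with compound heads attached to foreign types), any predicate $\preds$ appearing in a typing derivation consists only of atomic predicates on type variables. With that in hand, a simple structural analysis of how $\tcdup\tau$ can be derived by the Jones-style entailment rules shows that if the instance environment provides no matching rule for the head of $\tau$, then $\preds\not\tcentail\tcdup\tau$ unless $\preds$ itself is unsatisfiable. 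Once this side lemma is dispatched, the remaining inductive cases are routine.
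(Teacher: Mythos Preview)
Your proposal is correct and follows exactly the approach the paper indicates: induction on the typing derivation for $v$, with the paper offering no further detail beyond that one line. The only imprecision is the talk of an ``inconsistent $\preds$'' in the \rulename{Abs} and \rulename{TAbs} cases; the cleaner statement, which you in fact reach via your side lemma, is simply that under the syntactic restriction on constraint contexts the hypothesis $\preds\tcentail\tcdup\tau$ cannot be derived for those shapes of $\tau$, so those cases are vacuous rather than relying on any notion of inconsistency.
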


\begin{proof}
By induction on the typing derivation for $v$.
\end{proof}

Lemma~\ref{lem:capture} is essential to proving the substitution lemma 
(Lemma \ref{lem:subst}).
\begin{lem}[Substitution] \label{lem:subst}

If
\begin{itemize}
\item $\tje{\preds}{\Gamma,x\tmap\tau_{x}}{\Sigma_{1}}e{\tau}$\,,
\item $\tje{\preds}{\cdot}{\Sigma_{2}}v{\tau_{x}}$\,, and
\item $\Sigma_{1}\sccompat\Sigma_{2}$,
\end{itemize}
then $\tje{\preds}{\Gamma}{\Sigma_{1}+\Sigma_{2}}
        {\esub vxe}{\tau}$
\end{lem}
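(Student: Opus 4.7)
The plan is to proceed by structural induction on the typing derivation $\derivD$ of $\tje{\preds}{\Gamma, x\tmap\tau_x}{\Sigma_1}{e}{\tau}$, dispatching on the last rule applied. Because the variable context is linear, $x$ occurs exactly once as a free variable of $e$; this tells us, at each context-splitting rule, into which subderivation the substitution must be pushed. Store-typing compatibility is preserved under every split, since each sub-store typing is a sub-multiset of $\Sigma_1$ and $\Sigma_1 \sccompat \Sigma_2$ by assumption.

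The \rulename{Var} base case is immediate: $e = x$ forces $\Gamma = \cdot$, $\Sigma_1 = \cdot$, $\tau = \tau_x$, and $\esub{v}{x}{x} = v$ is typed directly by the second hypothesis. In each of the propagating cases (\rulename{App}, \rulename{Case}, \rulename{New}, \rulename{SwapW}, \rulename{SwapS}, \rulename{ReleaseW}, \rulename{ReleaseS}, \rulename{TApp}, and the pair/sum/\textbf{letp} rules), one identifies which premise's variable context contains $x$, merges $\Sigma_2$ into the corresponding sub-store typing, invokes the inductive hypothesis on that branch, and reassembles using the same rule. For \rulename{Case}, when $x$ lies in the branch-shared context $\Gamma_2$, the IH is applied uniformly to both branches so that $\Sigma_2$ is merged consistently. \rulename{TAbs} requires only routine alpha-renaming to keep the bound type variables fresh for $v$, and the Haskell~98--style restriction on constraints ensures that $\preds$ is preserved without subtlety. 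For \rulename{Dup} and \rulename{Drop}, if $x$ occurs inside the duplicated or dropped subexpression, the entailment premise on $\tcdup\tau_1$ or $\tcdrop\tau_1$ is unaffected by substitution since no types change.

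The delicate case is \rulename{Abs}. Suppose $e = \eqlam{\eaq}{y{:}\tau_1}{e_0}$, with typing premise $\tje{\preds}{\Gamma, x\tmap\tau_x, y\tmap\tau_1}{\Sigma_1}{e_0}{\tau_2}$ and side condition $\preds \tcentail \Constrain^{\eaq}(\Gamma, x\tmap\tau_x;\, \Sigma_1)$. The IH applied to $e_0$ yields $\tje{\preds}{\Gamma, y\tmap\tau_1}{\Sigma_1 + \Sigma_2}{\esub{v}{x}{e_0}}{\tau_2}$, so to reinvoke \rulename{Abs} we must establish $\preds \tcentail \Constrain^{\eaq}(\Gamma;\, \Sigma_1 + \Sigma_2)$. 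The portions concerning $\Gamma$ and $\Sigma_1$ are inherited from the original side condition; what remains is the structural constraints on the locations in $\Sigma_2$. Because the original side condition also gives $\preds \tcentail \Constrain^{\eaq}(\tau_x)$, and because $v$ has type $\tau_x$ under empty variable context and store typing $\Sigma_2$, Lemma~\ref{lem:capture} applied to $v$ (once per structural class contained in $\Constrain^{\eaq}$) delivers exactly the missing $\tcdup$ and/or $\tcdrop$ constraints on $\Sigma_2$.

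I expect this \rulename{Abs} case to be the main obstacle: it is where the capture lemma earns its keep, by bridging the gap between the structural discipline imposed on a closure's type and the structural properties of the locations reachable through the substituted value. Every other case is essentially bookkeeping --- choosing the correct side of each context split, applying the IH, and reassembling with the same rule.
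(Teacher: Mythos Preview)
Your proposal is correct and follows exactly the approach the paper takes: induction on the typing derivation for $e$, with Lemma~\ref{lem:capture} invoked in the \rulename{Abs} case to discharge the $\Constrain^{\eaq}$ side condition on the enlarged store typing $\Sigma_1+\Sigma_2$. The paper's proof is stated only as a one-line sketch, and your write-up is a faithful and accurate elaboration of it.
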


\begin{proof}
By induction on the typing derivation for $e$, making use of 
lemma~\ref{lem:capture} in the $\lambda$ case.
\end{proof}

In proving Preservation, it is also useful to separate out a Replacement
Lemma which specifies exactly how substitution interacts with evaluation
contexts.

\begin{lem}[Replacement]
\label{Replacement:}

If $\tje{\preds}{\Gamma}{\Sigma}{E\left[M\right]}{\tau}$ then $\exists\tau',\Sigma_{1},\Sigma_{2},\Gamma_{1},\Gamma_{2}$
such that 
\begin{itemize}
\item $\Sigma=\Sigma_{1}+\Sigma_{2}$ and $\Gamma=\Gamma_{1}+\Gamma_{2}$
and $\tje{\preds}{\Gamma_{1}}{\Sigma_{1}}M{\tau'}$ and furthermore
\item If $\tje{\preds}{\Gamma_{1}^{'}}{\Sigma_{1}^{'}}{M'}{\tau'}$ with
$\Gamma_{1}^{'}\sccompat\Gamma_{2}$ and $\Sigma_{1}^{'}\sccompat\Sigma_{2}$,
then $\tje{\preds}{\Gamma_{1}^{'}+\Gamma_{2}}{\Sigma_{1}^{'}+\Sigma_{2}}{E\left[M'\right]}{\tau}$
for any $M',\Gamma_{1}^{'},\Sigma_{1}^{'}$ 
\end{itemize}
\end{lem}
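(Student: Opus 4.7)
The proof proceeds by structural induction on the evaluation context $E$, closely mirroring the shape of the typing derivation at each layer.

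In the base case $E = \ehole$, the statement is immediate: take $\Gamma_1 = \Gamma$, $\Sigma_1 = \Sigma$, $\Gamma_2 = \cdot$, $\Sigma_2 = \cdot$, and $\tau' = \tau$. The first bullet is the assumed typing of $M$, and the second bullet follows because $E[M'] = M'$ and $\Gamma_1' + \cdot = \Gamma_1'$, $\Sigma_1' + \cdot = \Sigma_1'$.

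For each inductive case, I would invert the typing rule whose conclusion matches the outer form of $E$, invoke the IH on the strictly smaller evaluation context sitting in one of the premises, and then rebuild the derivation with the new subterm using the same typing rule. Consider $E = \eapp{E'}{e}$ as representative: inversion on \rulename{App} yields $\Gamma = \Gamma_a + \Gamma_b$, $\Sigma = \Sigma_a + \Sigma_b$ with $\tje{\preds}{\Gamma_a}{\Sigma_a}{E'[M]}{\tqarrow{\tau_2}{\tau}{\eaq}}$ and $\tje{\preds}{\Gamma_b}{\Sigma_b}{e}{\tau_2}$. The IH on $E'$ gives $\Gamma_a = \Gamma_1 + \Gamma_2^a$, $\Sigma_a = \Sigma_1 + \Sigma_2^a$, and $\tje{\preds}{\Gamma_1}{\Sigma_1}{M}{\tau'}$. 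I then set $\Gamma_2 = \Gamma_2^a + \Gamma_b$ and $\Sigma_2 = \Sigma_2^a + \Sigma_b$, which witnesses the first bullet. For the second bullet, given any $\tje{\preds}{\Gamma_1'}{\Sigma_1'}{M'}{\tau'}$ compatible with $\Gamma_2,\Sigma_2$, compatibility with the subparts $\Gamma_2^a,\Sigma_2^a$ follows, so the IH rebuilds the typing of $E'[M']$, and a single application of \rulename{App} with the unchanged premise for $e$ finishes the case. The other evaluation-context constructors ($\etapp E{\overline{\tau_i}}$, $\epair{E}{e}$, $\epair{v}{E}$, $\einl E$, $\einr E$, the various $\eqnew{\erq}E$, $\eqget{\erq}E$, $\eqswp{\erq}{\cdot}{\cdot}$, $\ematch{E}{x_1}{e_1}{x_2}{e_2}$, $\eletp{x_1}{x_2}{E}\ein e$, $\edup{E}{x_1}{x_2}\ein e$, and $\edrop{E}\ein e$) are handled analogously, with each contributing its own leftover contexts into the accumulating $\Gamma_2,\Sigma_2$.

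The main obstacle is bookkeeping for the linear context split under $+$. One must use associativity and commutativity of $+$ (up to the side conditions $\sccompat$) to reorganize the decomposition provided by the IH with the decomposition provided by inversion, and to show that $\Gamma_1' \sccompat \Gamma_2$ implies $\Gamma_1' \sccompat \Gamma_2^a$ and $\Gamma_1' + \Gamma_2^a \sccompat \Gamma_b$ (and analogously for $\Sigma$), so that the reassembled judgment is well-formed. Cases such as \rulename{Case} require a little extra care because both branches share the same residual context, but since $E$ only descends into the scrutinee, the branches are untouched and the same split works for both. Once the compatibility bookkeeping is in place, each inductive case is mechanical.
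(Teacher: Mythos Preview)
Your proposal is correct and follows exactly the approach the paper takes: the paper's own proof is simply ``By induction on $E$,'' and your write-up spells out precisely the standard invert--IH--rebuild argument that this one-line proof abbreviates. The compatibility bookkeeping you flag is real but routine, and nothing in your sketch deviates from or goes beyond what the paper intends.
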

\begin{proof}
By induction on $E$.
\end{proof}

Another key lemma for proving preservation relates the $\floc$ function
used to maintain dynamic reference counts with the store context that
a value requires. To state this lemma, we overload the $\floc$ function
to also return the multiset of occurrences (multiple for reference counted weak 
location stores) of locations in the domain of a store context.

\begin{lem}[Store Contexts map Free Locations]
\label{lem:floc-ctxt}

If $\tje{\preds}{\Gamma}{\Sigma}e{\tau}$ then $\floc\left(e\right)=\floc\left(\Sigma\right)$.
\end{lem}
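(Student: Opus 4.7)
The plan is to proceed by straightforward induction on the typing derivation $\tje{\preds}{\Gamma}{\Sigma}{e}{\tau}$. The strategy is to check that in each typing rule, the way the store context $\Sigma$ is constructed matches the way $\floc$ aggregates locations from subterms. To make this work cleanly, I first want to establish an auxiliary fact about the overloaded $\floc$ on store contexts: that it distributes over joins, i.e.\ $\floc(\Sigma_1 + \Sigma_2) = \floc(\Sigma_1) + \floc(\Sigma_2)$ as multisets. This follows directly from how joining is defined: disjoint strong locations contribute themselves to either side, while shared weak locations have their reference counts added, which exactly corresponds to multiset sum on location occurrences.

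For the base cases, \rulename{Var} gives $\Sigma = \cdot$ and the term is a variable with no locations, so both sides are empty. For \rulename{LocW} we have $\Sigma = \elab \ltwmap{1} \tau$, and $\floc(\elab) = \{\elab\}$ matches $\floc(\Sigma) = \{\elab\}$ with multiplicity one (the weak count); \rulename{LocS} is analogous, with strong locations contributing multiplicity one by convention.

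The inductive cases fall into two patterns. The rules that pass $\Sigma$ through unchanged to a single premise---\rulename{Abs}, \rulename{New}, \rulename{ReleaseW}, \rulename{ReleaseS}, \rulename{TAbs}, and \rulename{TApp}---are immediate from the induction hypothesis together with the corresponding clause of $\floc$, which simply recurses into the body. The rules that split the context additively---\rulename{App}, \rulename{Dup}, \rulename{Drop}, \rulename{SwapW}, and \rulename{SwapS}---combine two IH equations $\floc(e_i) = \floc(\Sigma_i)$ with the distributivity fact above, giving $\floc(e) = \floc(e_1) + \floc(e_2) = \floc(\Sigma_1) + \floc(\Sigma_2) = \floc(\Sigma_1 + \Sigma_2)$.

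The case I expect to require the most care is \rulename{Case}, because $\floc$ on a case expression is defined to look only at one branch (on the assumption of well-typedness). Here the typing rule forces both branches $e_{21}$ and $e_{22}$ to share the same location context $\Sigma_2$; the induction hypothesis applied to either branch gives $\floc(e_{2i}) = \floc(\Sigma_2)$, which justifies the asymmetric $\floc$ clause. Combined with IH on the scrutinee and the distributivity fact, this closes the case. The only other subtlety is making sure the extension of $\floc$ to contexts counts each weak location with its reference-count multiplicity and each strong location once, since this is what the store-join operation preserves; once that convention is fixed, no case presents any essential difficulty.
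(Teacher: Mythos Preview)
Your proposal is correct and follows the natural approach: induction on the typing derivation, using the distributivity of $\floc$ over the join $\Sigma_1 + \Sigma_2$ for the context-splitting rules and handling \rulename{Case} via the shared $\Sigma_2$ in both branch premises. The paper itself does not give a proof of this lemma---it is stated without proof, with details deferred to the accompanying thesis---so there is no explicit argument to compare against; your induction is exactly the expected one and would go through as written.
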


Finally, to prove preservation and type soundness we need to introduce store
and configuration typings which are given in Figure~\ref{fig:Store-Typing}.
The remainder of the type soundness proof is then mostly standard.

\begin{figure}
\begin{mathpar}
\cinferlt{St-Nil}{\ }{\tjs{\Sigma}{\cdot}{\cdot}}
\and
\cinferlt{St-ConsW}
  {\tjs{\Sigma_{1}}{\mu}{\Sigma_{2}
    \icr\tje{\cdot}{\cdot}{\Sigma_{v}}v{\tau}}}
  {\tjs{\Sigma_{1}+\Sigma_{v}}
    {\mu,\elab\lqmap iv}{\Sigma_{2},\elab\ltqmap wi\tau}}
\and
\cinferlt{St-ConsS}
  {\tjs{\Sigma_{1}}{\mu}{\Sigma_{2}\icr
    \tje{\cdot}{\cdot}{\Sigma_{v}}v{\tau}}}
  {\tjs{\Sigma_{1}+\Sigma_{v}}{\mu,\elab\lqmap 1v}
    {\Sigma_{2},\elab\ltsmap\tau}}
\and
\cinferlt{Conf}
  {\tjs{\Sigma_{1}}{\mu}{\Sigma_{1}+\Sigma_{2}\icr
    \tje{\cdot}{\cdot}{\Sigma_{2}}e{\tau}}}
  {\tjc{\mu}e{\tau}}

\end{mathpar}
\vspace{-1em}
\caption{Store and configuration typing \label{fig:Store-Typing}}

\end{figure}

\begin{lem}[Preservation]
\label{Preservation:}

If $\tjc{\mu_{1}}{e_{1}}{\tau}$ and $\nmstate{\mu_{1}}{e_{1}}\nmtrans\nmstate{\mu_{2}}{e_{2}}$
then $\tjc{\mu_{2}}{e_{2}}{\tau}$
\end{lem}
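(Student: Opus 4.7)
The plan is to proceed by case analysis on the small-step rule applied to $\nmstate{\mu_1}{e_1}$. First, I would unfold the configuration typing via \rulename{Conf} to obtain $\tjs{\Sigma_1}{\mu_1}{\Sigma_1+\Sigma_2}$ and $\tje{\cdot}{\cdot}{\Sigma_2}{e_1}{\tau}$. Most reduction rules fire inside an evaluation context $E$, so I would apply the Replacement Lemma to split $\Sigma_2 = \Sigma_{2a} + \Sigma_{2b}$ and $e_1 = E[m_1]$ with $\tje{\cdot}{\cdot}{\Sigma_{2a}}{m_1}{\tau'}$; it then suffices to re-type $m_2$ against a (possibly modified) store and context compatible with $\Sigma_{2b}$ and $\Sigma_1$, and reassemble the configuration via Replacement plus the appropriate store-typing rule.

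The per-rule arguments divide into three groups. For the two $\beta$-style rules (term and type application), I would invert \rulename{Abs}/\rulename{TAbs} on the function value and invoke the Substitution Lemma~\ref{lem:subst} (using $\Sigma_1 = \Sigma_{2a}^{fun}+\Sigma_{2a}^{arg}$); the constraint entailment from \rulename{TApp} gives the instantiated predicate context. For reference manipulation --- $\mathbf{new}$, $\mathbf{swap}$ (weak/strong), and both forms of $\mathbf{release}$ --- I would re-derive the store typing by modifying the relevant location binding in both $\mu$ and its store typing simultaneously: \rulename{New} extends $\Sigma$ with a fresh strong or weak-count-$1$ binding; \rulename{SwapW}/\rulename{SwapS} swap the bound value, using the premise typing of $e_2$ to re-type the new contents in the store; \rulename{ReleaseS} removes a strong binding (whose reference count is $1$), shifting $\Sigma_v$ into the expression's context; \rulename{ReleaseW} either decrements a weak count (returning $\einr{\eunit}$, which needs no contents in the store typing) or, at count $1$, removes the binding and transfers $\Sigma_v$ out, matching the $\einl v$ result.

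The subtle group is $\mathbf{dup}$ and $\mathbf{drop}$, where reference counts in the store are adjusted via $\nmincr{\floc(v)}{\mu}$ and $\nmdecr{\floc(v)}{\mu}$. For $\mathbf{dup}\,v$, I would invert \rulename{Dup} to get $\tje{\preds}{\cdot}{\Sigma_v}{v}{\tau_1}$ with $\preds\tcentail\tcdup\tau_1$; Lemma~\ref{lem:capture} then yields $\preds\tcentail\tcdup\Sigma_v$, which (together with Lemma~\ref{lem:floc-ctxt} giving $\floc(v)=\floc(\Sigma_v)$) lets me conclude that every location whose count is bumped is of a duplicable reference type, so the incremented store is still typed by $\Sigma_v+\Sigma_v$, exactly what the two-fold substitution of $v$ into $e_2$ requires. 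The $\mathbf{drop}$ case is dual but more delicate, because $\nmdecr{}{}$ recurses when a weak count hits $1$: here I would prove a small auxiliary claim by induction on the recursion of $\nmdecr{}{}$ stating that if the store is well typed and $\preds\tcentail\tcdrop\Sigma_v$, then $\nmdecr{\floc(v)}{\mu}$ is well typed against $\Sigma_1$ with $\Sigma_v$ removed; Lemmas~\ref{lem:capture} and~\ref{lem:floc-ctxt} again supply the dropability of everything being cascaded.

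The main obstacle I expect is precisely this dup/drop interaction with reference counts: aligning the multiset accounting done by $\floc$ and the $+$ operator on store contexts with the recursive decrement, and threading the type-class constraints through chains of nested references. Once that auxiliary invariant is established, the remaining cases are routine reassembly via Replacement and the store-typing rules \rulename{St-ConsW}/\rulename{St-ConsS}/\rulename{Conf}.
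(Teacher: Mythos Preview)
Your proposal is correct and follows exactly the route the paper sets up: the paper does not actually spell out a proof of Preservation (it states the lemma and remarks that ``the remainder of the type soundness proof is then mostly standard,'' deferring details to~\cite{ganthesis}), but the machinery it assembles---Replacement (Lemma~\ref{Replacement:}) for evaluation contexts, Substitution (Lemma~\ref{lem:subst}) for the $\beta$-cases, and Lemmas~\ref{lem:capture} and~\ref{lem:floc-ctxt} for the \textbf{dup}/\textbf{drop} cases---is precisely what you invoke, in the intended places. Your observation that the \textbf{drop} case requires an auxiliary induction on the recursion of $\nmdecr{}{}$ (because deallocating a cell cascades into its contents) is the one nontrivial point the paper leaves implicit, and your use of the instance $(\tcdrop\alpha)\Rightarrow\tcdrop(\tqref{\erq}\alpha)$ via Lemma~\ref{lem:capture} to propagate dropability through that cascade is the right idea.

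One small slip: in the $\beta$-case you write $\Sigma_1 = \Sigma_{2a}^{\mathit{fun}}+\Sigma_{2a}^{\mathit{arg}}$, but you mean to split $\Sigma_{2a}$, not the store-side context $\Sigma_1$.
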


\begin{thm}[Type soundness]
If $\tjc{\cdot}e{\tau}$ then either it diverges or it reduces to
a value configuration $\nmstate{\mu}v$ such that $\tjc{\mu}v{\tau}$.
\end{thm}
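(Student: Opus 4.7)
The plan is to establish soundness by combining the already-stated Preservation lemma with a Progress lemma in the standard Wright--Felleisen style, then iterating.

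First I would prove Progress: if $\tjc{\mu}{e}{\tau}$ and $e$ is not a value, then there exist $\mu'$ and $e'$ with $\nmstate{\mu}{e} \nmtrans \nmstate{\mu'}{e'}$. The argument is by induction on the derivation of $\tje{\cdot}{\cdot}{\Sigma}{e}{\tau}$, using canonical forms: a closed value of arrow type $\tqarrow{\tau_1}{\tau_2}{\eaq}$ must be a $\lambda$-abstraction; a closed value of product, sum, or unit type must be a pair, an $\einl v$ or $\einr v$, or $\eunit$; a closed value of type $\tqref{\erq}{\tau}$ must be a location $\elab$; and a closed value of type $\tsch{\alphais}{\preds}{\tau}$ must be a $\etlam{\alphais}{\preds}{v}$. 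For each typing rule producing a non-value, either the head redex fires via one of the rules in figure~\ref{fig:Small-Step-Relation} once the canonical form is exposed, or the induction hypothesis supplies a step beneath some evaluation context $E$.

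The subtle case is reference access: when the derivation gives $\tje{\cdot}{\cdot}{\Sigma}{\elab}{\tqref{\erq}{\tau}}$ inside a $\mathbf{swap}$ or $\mathbf{release}$, I need to know that $\mu$ actually binds $\elab$ so the rule can fire. This follows from the configuration typing rule \rulename{Conf}, which forces the store typings threaded through $\mu$ and $e$ to match, together with Lemma~\ref{lem:floc-ctxt}, which pins $\floc(e)$ to $\floc(\Sigma)$ and hence to the domain of $\mu$. The $\mathbf{dup}$ and $\mathbf{drop}$ forms step uniformly once their scrutinee is a value, adjusting reference counts via $\nmincr{\floc(v)}{\mu}$ and $\nmdecr{\floc(v)}{\mu}$; the same domain argument ensures these metafunctions are defined on precisely the locations they touch.

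With Progress and Preservation~(Lemma~\ref{Preservation:}) in hand the theorem is immediate. Given $\tjc{\cdot}{e}{\tau}$, I coinductively construct the reduction sequence: at each point, either the current configuration is already $\nmstate{\mu}{v}$ and Preservation along the prefix yields $\tjc{\mu}{v}{\tau}$, or Progress supplies a step and Preservation carries the typing forward. Hence execution either diverges or terminates in a well-typed value configuration. The main obstacle in the whole soundness development is not here but in Preservation itself, which relies crucially on the Constraints-capture-bindings lemma~(Lemma~\ref{lem:capture}) and the Substitution lemma~(Lemma~\ref{lem:subst}) to split and rejoin the linear variable and location contexts under the various typing rules; given those, Progress is comparatively routine.
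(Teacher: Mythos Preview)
Your proposal is correct and matches the paper's intent: the paper does not spell out a proof of the theorem at all, merely remarking that ``the remainder of the type soundness proof is then mostly standard'' once the auxiliary lemmas (Constraints-capture-bindings, Substitution, Replacement, Store-Contexts-map-Free-Locations, and Preservation) are in place. Your Wright--Felleisen decomposition into Progress plus Preservation, with canonical-forms reasoning and the use of \rulename{Conf} together with Lemma~\ref{lem:floc-ctxt} to guarantee that needed locations are present in $\mu$, is exactly the ``standard'' argument the paper is gesturing at.
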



\section{Implementing the Clamp Type Checker \label{sec:implementing}}

We have implemented a type checker that infers Damas--Hindley--Milner style
type schemes for
Clamp terms. The type checker is an extension of Jones's ``Typing Haskell
in Haskell'' type checker~\cite{typinghaskell}.
Its source code may be found at
\url{https://github.com/edgan8/clampcheck}.
 
The process of modifying a Haskell type checker to support Clamp was
straightforward and illustrates one of the strengths of Clamp's design: It
requires only small and orthogonal additions to a language like Haskell. Besides
adding qualifiers to arrow types, we made three additions
to a Haskell type checker:
\begin{enumerate}
\item an elaboration pass that inserts dups and drops,
\item $\tcdup$ and $\tcdrop$ type classes and instances, and
\item substructural qualifiers and constraints on arrow types.
\end{enumerate}

\subsection{Inferring dups and drops. \label{sec:inferring}}

The elaboration pass is the bridge between a concise user-facing
language and leveraging conventional, nonlinear type checking techniques. The
pass takes as input a term with arbitrary variable
usages; it inserts the appropriate dup and drop operations and renames
the duplicated copies so that in the resulting term all variable usage is
strictly linear. Structural properties are then enforced by the
constraints imposed by dup and drop.

Since different elaborations can lead to different static and dynamic
semantics, we have proven that our algorithm generates an \emph{optimal}
elaboration in two senses:
\begin{itemize}
\item It minimizes the program's live variables.
\item It imposes minimal type class constraints.
\end{itemize}
In what follows we define a core linear language to formalize
and estabilsh these two points.

\paragraph{An Abstract Linear Language\label{sec:core}}

To focus on the essential problems, we can work with an
\emph{abstraction} of the linear $\lambda$~calculus, $\cll$. By modeling only
usage and binding, $\cll$ allows us to focus on inserting dup and drop
operations independently of the particular types and term forms of a language.
Its syntax is given in figure~\ref{fig:lclsyntax}. Extending the results to
cover other term forms is straightforward.

\begin{figure}
\begin{plstx}
  (unannotated terms):
  e ::=x
  | \clam x e
  | \cpair{e_{1}}{e_{2}}
  | \cchoice{e_{1}}{e_{2}}\\
  (annotated terms):
  ae ::=x
  | \clam x ae
  | \cpair{ae_{1}}{ae_{2}}
  | \cchoice{ae_{1}}{ae_{2}}
  | \cdup{\Gamma}{ae}
  | \cdrop{\Gamma}{ae} \\
\end{plstx}
\caption{$\cll$ syntax \label{fig:lclsyntax}}
\end{figure}

The product expression $\cpair{e_{1}}{e_{2}}$ abstracts multiplicative
forms such as pairs and function applications---that is,
pairs of expressions where both will be evaluated.
The sum expression $\cchoice{e_{1}}{e_{2}}$ abstracts additive forms
such as linear logic's additive conjuction,
and the relationship between the branches of a $\textbf{case}$---that is,
pairs of expressions where exactly one will be evaluated.

Expressions, $e$, are unannotated and don't explicitly satisfy linear usage
constraints. Annotated expression, $ae$, use dup and drop
operations to explicitly specify nonlinear usage of variables. The dup and
drop operations work over contexts, $\Gamma$, which are multisets of variables

The contexts $\Gamma$ in $\cll$ manage scope and binding by restricting
contraction and weakening to explicit dup and drop annotations, but do not
track the types of variables. In order to avoid the messy but straightforward
process of generating names and renaming variables when inserting a dup, we
think of contexts as multisets (\emph{e.g.,} $\{ x, x, y, z, \ldots\}$) of 
in-scope variables, or equivalently as functions from variables to natural
numbers. Thus $\Gamma(x)$ will be used to denote the number of times $x$ appears
in $\Gamma$.

\begin{figure}
\begin{mathpar}
\cinferlt{L-Var}{\ }{\cwf{\left\{ x\right\} }x} \and
\cinferlt{L-Abs}{\cwf{\Gamma+\left\{ x\right\} }{ae} \icr x\notin\Gamma} 
  {\cwf{\Gamma}{\clam x{ae}}} \and
\cinferlt{L-Pair}{\cwf{\Gamma_{1}}{ae_{1}}\icr\cwf{\Gamma_{2}}{ae_{2}}}
  {\cwf{\Gamma_{1}\evp\Gamma_{2}}{\cpair{ae_{1}}{ae_{2}}}} \and
\cinferlt{L-Choice}{\cwf{\Gamma}{ae_{1}}\icr\cwf{\Gamma}{ae_{2}}}
  {\cwf{\Gamma}{\cchoice{ae_{1}}{ae_{2}}}} \and
\cinferlt{L-Dup}{\cwf{\Gamma_{1}\evp\Gamma_{2}\evp\Gamma_{2}}{ae}}
  {\cwf{\Gamma_{1}\evp\Gamma_{2}}{\cdup{\Gamma_{2}}{ae}}} \and
\cinferlt{L-Drop}{\cwf{\Gamma_{1}}{ae}}
  {\cwf{\Gamma_{1}\evp\Gamma_{2}}{\cdrop{\Gamma_{2}}{ae}}}
\end{mathpar}
\vspace{-1em}
\caption{Annotated expression well-formedness \label{fig:Well-Formedness-Rules}}
\end{figure}

We use these multiset contexts to define a notion
of \emph{well-formedness} in figure~\ref{fig:Well-Formedness-Rules},
which describes when an annotated term $ae$ in $\cll$ properly accounts for
all nonlinear usage of its variables through explicit dup and drop
operations.

\paragraph{Inference Algorithm}

An inference algorithm for annotating terms is given in 
figure~\ref{fig:Inference-Algorithm}.  
The strategy is to recursively transform a term
bottom-up based on the free variables $\fv$ in each recursively
transformed sub-term. Note that the $\fv$ function always returns a \emph{set} 
and that $\cap$ and $\setminus$ denote the standard set intersection and
difference operators.

Dup operations are inserted where the free variables of
two sub-terms of a multiplicative form (\emph{e.g.,} application, but not
branching) are discovered to intersect; drops are added under binders when the
bound variable is not free in its scope, and when a variable used in one
branch (say, of an if-then-else) is not free in the other. 

\begin{figure}
\begin{align*}
\inferf\left(x\right) & = x \\
\inferf\left(\clam xe\right) & =
    \begin{cases}
    \clam x{\inferf\left(e\right)}
    & \quad\textnormal{if }x\in\fv(e);\\
    \clam x{\cdrop x{\inferf\left(e\right)}}
    & \quad\textnormal{otherwise}
    \end{cases}\\
\inferf\left(\cpair{e_1}{e_2}\right) &
= \cdup{\fv(e_1) \cap \fv(e_2)}
  {\cpair{\inferf(e_1)}{\inferf(e_2)}} \\
\inferf\left(\cchoice {e_1}{e_2} \right) &
= \cchoice{\left(\cdrop{\fv(e_2)\setminus\fv(e_1)}{\inferf(e_1)}\right)}
  {\left(\cdrop{\fv(e_1)\setminus\fv(e_2)}{\inferf(e_2)}\right)}
\end{align*}
\vspace{-1em}
\caption{Inference algorithm \label{fig:Inference-Algorithm}}
\end{figure}

We outline the key steps in our
optimality argument here; additional details may be found in
\cite{ganthesis}. Lemma~\ref{lem:soundfv} states that the algorithm
is sound.

\begin{lem}[Soundness]
$\cwf{\fv(e)}{\inferf(e)}$
\label{lem:soundfv}
\end{lem}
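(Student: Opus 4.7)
The plan is to prove the statement by structural induction on the unannotated term $e$. For each of the four syntactic forms the algorithm treats, I would unfold one level of $\inferf$ and apply the corresponding well-formedness rules, using the inductive hypothesis on the immediate sub-terms. The key bookkeeping issue throughout is that $\fv$ returns a set while well-formedness contexts are multisets, so the main calculations involve verifying multiset identities that justify each application of \rulename{L-Dup} and \rulename{L-Drop}.

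The variable case is immediate from \rulename{L-Var}. For $e = \clam{x}{e'}$, I would split on whether $x \in \fv(e')$: if so, $\inferf$ leaves the body alone and \rulename{L-Abs} applies directly using the IH together with $\fv(\clam{x}{e'}) \cup \{x\} = \fv(e')$; if not, a $\mathbf{drop}\ x$ is wrapped around the body, \rulename{L-Drop} eliminates it against $\fv(\clam{x}{e'}) = \fv(e')$, and then \rulename{L-Abs} again closes the case. The multiplicative form $\cpair{e_1}{e_2}$ is the place where the dup rule earns its keep: setting $\Gamma_2 = \fv(e_1)\cap\fv(e_2)$ and $\Gamma_1 = (\fv(e_1)\cup\fv(e_2))\setminus\Gamma_2$, the crucial multiset identity
\[
  \Gamma_1 + \Gamma_2 + \Gamma_2 \;=\; \fv(e_1) + \fv(e_2)
\]
(each variable shared by both sides contributes $2$ on both sides; each unshared one contributes $1$) lets \rulename{L-Pair} fire from the IHs and then \rulename{L-Dup} yields $\cwf{\Gamma_1+\Gamma_2}{\inferf(\cpair{e_1}{e_2})}$, with $\Gamma_1+\Gamma_2 = \fv(e_1)\cup\fv(e_2) = \fv(\cpair{e_1}{e_2})$ as required. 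For the additive form $\cchoice{e_1}{e_2}$, I would apply \rulename{L-Drop} on each branch with $\Gamma_2 = \fv(e_{3-i})\setminus\fv(e_i)$, equalizing both branches to the common context $\fv(e_1)\cup\fv(e_2)$, and then invoke \rulename{L-Choice}.

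The only real obstacle is the multiset accounting in the $\cpair{\cdot}{\cdot}$ case; once the identity above is observed it is mechanical, but it is worth stating explicitly because the algorithm in figure~\ref{fig:Inference-Algorithm} uses a set intersection, while well-formedness demands a multiset context in which the shared variables appear doubly under the dup. All other cases reduce to straightforward rewrites of $\fv$ as set-theoretic unions, intersections, and differences, matched against the multiset sum $+$ that appears in the well-formedness rules.
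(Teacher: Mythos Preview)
Your proof is correct. The paper itself does not give a proof of this lemma in the text (it defers details to the thesis~\cite{ganthesis}), but structural induction on $e$ with the multiset bookkeeping you describe is exactly the expected argument, and your handling of each case---in particular the identity $\Gamma_1 + \Gamma_2 + \Gamma_2 = \fv(e_1) + \fv(e_2)$ in the multiplicative case and the equalization of branch contexts via \rulename{L-Drop} in the additive case---is sound.
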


If $ae$ is an annotation of $e$, then $\fv(e) \evsub \fv(ae)$, so
lemma~\ref{lem:minctx} shows that our algorithm in fact generates an
annotation which requires a minimal context $\Gamma$ for well-formedness.

\begin{lem}[Minimal contexts]
If $\cwf{\Gamma}{ae}$ then $\fv(ae) \evsub \Gamma$.
\label{lem:minctx}
\end{lem}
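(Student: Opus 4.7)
The plan is to proceed by induction on the well-formedness derivation $\cwf{\Gamma}{ae}$, doing case analysis on the final rule. The base case \rulename{L-Var} is immediate since $\fv(x) = \{x\} = \Gamma$. For \rulename{L-Pair} we apply the IH to both premises, obtaining $\fv(ae_1) \evsub \Gamma_1$ and $\fv(ae_2) \evsub \Gamma_2$, and then use that a union of two sets is contained in the multiset sum of any two multisets bounding them: $\fv(ae_1) \cup \fv(ae_2) \evsub \Gamma_1 \evp \Gamma_2$. For \rulename{L-Choice} both branches share $\Gamma$, so taking the union of the two IH conclusions suffices. For \rulename{L-Abs} the IH gives $\fv(ae) \evsub \Gamma \evp \{x\}$, and since $x \notin \Gamma$, removing $x$ from each side yields $\fv(ae) \setminus \{x\} \evsub \Gamma$, which matches $\fv(\clam{x}{ae})$.

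The two cases requiring a moment's thought are \rulename{L-Dup} and \rulename{L-Drop}. For \rulename{L-Dup} we have the IH $\fv(ae) \evsub \Gamma_1 \evp \Gamma_2 \evp \Gamma_2$, and we need $\fv(\cdup{\Gamma_2}{ae}) \evsub \Gamma_1 \evp \Gamma_2$. Under the natural reading, $\fv(\cdup{\Gamma_2}{ae}) = \fv(ae)$, since the dup form introduces no new names---it merely reuses ones already free in $ae$. The crucial observation is that $\fv(ae)$ is a \emph{set}: every variable has multiplicity at most one. So $\fv(ae) \evsub \Gamma_1 \evp \Gamma_2$ reduces to showing that each $x \in \fv(ae)$ appears at least once in $\Gamma_1 \evp \Gamma_2$; but any $x$ occurring at all in $\Gamma_1 \evp \Gamma_2 \evp \Gamma_2$ must lie in the support of $\Gamma_1$ or $\Gamma_2$, so it appears in $\Gamma_1 \evp \Gamma_2$. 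For \rulename{L-Drop}, the IH gives $\fv(ae) \evsub \Gamma_1$, while $\fv(\cdrop{\Gamma_2}{ae})$ adjoins the support of $\Gamma_2$, which is contained in $\Gamma_2$ itself (assuming contexts record no zero multiplicities); summing the two containments gives the result.

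The only subtle point—and the main obstacle if there is one—is the collapse from multiplicity $\geq 2$ on the right of $\evsub$ to a set element on the left in the \rulename{L-Dup} case. This is where one must be explicit that $\fv$ yields a set while the contexts $\Gamma$ are multisets, so the apparent ``loss'' of a copy of $\Gamma_2$ is absorbed harmlessly. Everything else is routine bookkeeping with multiset sum, set union, and set difference, parallel to the soundness argument of Lemma~\ref{lem:soundfv}.
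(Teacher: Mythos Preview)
Your proposal is correct and is the natural induction on the well-formedness derivation; the paper itself states the lemma without proof (deferring details to Gan's thesis), so there is nothing to compare against beyond noting that this is the obvious approach. One small point: you take $\fv(\cdup{\Gamma_2}{ae}) = \fv(ae)$, but since $\Gamma_2$ appears syntactically in the term, a reader might expect $\fv(\cdup{\Gamma_2}{ae}) = \fv(ae) \cup \mathrm{supp}(\Gamma_2)$ instead; either way your argument goes through, since $\mathrm{supp}(\Gamma_2) \evsub \Gamma_2 \evsub \Gamma_1 \evp \Gamma_2$ and the left-hand side remains a set.
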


With some technical lemmas, we can then prove that the algorithm introduces no
unnecessary dups or drops on variables. In order to compare different
potential annotations of the same term, we define a function $\erase :
ae \to e$, which removes dup and drop annotations from an annotated
term in the straightforward way, yielding an unannotated term. It should be
evident that $\erase(\inferf(e)) = e$.

\begin{lem}[Forced drop]
  If $\cwf{\Gamma}{ae}$, $\Gamma\left(x\right)\geq1$, and 
  $x\notin\fv\left(\erase\left(ae\right)\right)$, 
  then $ae$ contains a subterm $\cdrop{\Gamma'}{ae'}$ 
  such that $x\in\Gamma'$.
\label{lem:fdrop}
\end{lem}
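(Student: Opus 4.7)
}
The plan is to proceed by induction on the derivation of $\cwf{\Gamma}{ae}$ (equivalently, on the structure of $ae$), using at each step the observation that erasure commutes with sub-term structure except that $\erase(\cdup{\Gamma'}{ae'}) = \erase(\cdrop{\Gamma'}{ae'}) = \erase(ae')$. The guiding intuition is that each occurrence of $x$ charged to $\Gamma$ must ultimately be accounted for in the derivation, either by appearing as a free variable in the erasure or by being absorbed by some explicit $\mathbf{drop}$; since the hypothesis $x\notin\fv(\erase(ae))$ rules out the former, a $\mathbf{drop}$ on $x$ must exist somewhere in $ae$.

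First I would dispose of the base case \rulename{L-Var}: if $ae=y$ then $\Gamma=\{y\}$, so $\Gamma(x)\geq 1$ forces $x=y$, but then $x\in\fv(\erase(ae))$, contradicting the hypothesis and making this case vacuous. For \rulename{L-Abs}, where $ae=\clam y{ae'}$ with $\cwf{\Gamma+\{y\}}{ae'}$ and $y\notin\Gamma$, note that $x\neq y$ (since $\Gamma(x)\geq 1$ but $y\notin\Gamma$), so $x\notin\fv(\erase(ae))=\fv(\erase(ae'))\setminus\{y\}$ implies $x\notin\fv(\erase(ae'))$; the inductive hypothesis applied to $ae'$ with context $\Gamma+\{y\}$ yields the required sub-term, which is also a sub-term of $ae$. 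The cases \rulename{L-Pair} and \rulename{L-Choice} are handled by observing that $\fv(\erase(ae))$ is the union of the $\fv(\erase(ae_i))$, so $x$ is absent from both, while $\Gamma(x)\geq 1$ puts $x$ into at least one of the sub-contexts (both, in the \rulename{L-Choice} case, since the context is shared); applying the IH to any such sub-term yields the desired drop.

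The two interesting cases are \rulename{L-Dup} and \rulename{L-Drop}. For \rulename{L-Dup}, with $ae=\cdup{\Gamma_2}{ae'}$ and $\cwf{\Gamma_1+\Gamma_2+\Gamma_2}{ae'}$, we have $\erase(ae)=\erase(ae')$, so $x\notin\fv(\erase(ae'))$, while $\Gamma(x)=\Gamma_1(x)+\Gamma_2(x)\geq 1$ implies $(\Gamma_1+\Gamma_2+\Gamma_2)(x)\geq 1$; the IH then produces a drop sub-term inside $ae'$. For \rulename{L-Drop}, with $ae=\cdrop{\Gamma_2}{ae'}$ and $\cwf{\Gamma_1}{ae'}$, we case-split on whether $x\in\Gamma_2$: if so, $ae$ itself is the witnessing sub-term with $\Gamma'=\Gamma_2$; otherwise $x\in\Gamma_1$ with $\Gamma_1(x)\geq 1$, and since $\erase(ae')=\erase(ae)$, the IH applies to $ae'$.

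The main obstacle is simply being meticulous about two bookkeeping points: remembering that $\erase$ sees through $\mathbf{dup}$ and $\mathbf{drop}$ (so the free-variable hypothesis transfers unchanged to sub-terms in those cases), and carefully tracking multiset membership when splitting contexts in \rulename{L-Pair} and \rulename{L-Drop}. Once these details are in place, every case either produces the drop immediately or reduces to a strictly smaller derivation where the IH applies.
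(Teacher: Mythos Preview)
Your proof is correct: the induction on the well-formedness derivation, with the vacuous \rulename{L-Var} case, the direct witness in the \rulename{L-Drop} case when $x\in\Gamma_2$, and the inductive appeal in every other case, goes through exactly as you describe. The paper itself does not give an inline proof of this lemma (it defers the details to~\cite{ganthesis}), but structural induction on $ae$ is the expected argument, and your case analysis handles the bookkeeping for context splitting and for $\erase$ passing through $\mathbf{dup}/\mathbf{drop}$ correctly.
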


\begin{lem}[Forced dup]
  If $\cwf{\Gamma}{ae}$, $\Gamma\left(x\right)\leq1$, and there
  exists a subderivation $\cwf{\Gamma_{s}}{ae_{s}}$ of
  $\cwf{\Gamma}{ae}$ with $\Gamma_{s}\left(x\right)\geq2$, 
  then $ae$ contains a subterm $\cdup{\Gamma'}{ae'}$ 
  such that $x\in\Gamma'$
\label{lem:fdup}
\end{lem}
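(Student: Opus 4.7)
The plan is to proceed by induction on the derivation $\cwf{\Gamma}{ae}$. The guiding intuition is that traversing the rules \emph{upward} from the root, the count of any fixed variable $x$ can only stay the same or decrease, \emph{except} in two rules: \rulename{L-Abs} promotes the bound variable from count $0$ to count $1$ (and its side condition $x\notin\Gamma$ forbids further shadowing along a branch, so \rulename{L-Abs} alone cannot push a count past $1$), and \rulename{L-Dup} doubles the count of every variable lying in its $\Gamma_2$ annotation. Consequently, any path from a root context with $\Gamma(x)\leq1$ to a subderivation context with $\Gamma_s(x)\geq2$ must cross a \rulename{L-Dup} step whose annotation contains $x$, and that annotation is exactly the witness $\Gamma'$.

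The inductive case analysis is as follows. The base case \rulename{L-Var} is vacuous since the derivation has no proper subderivation and $\Gamma_s$ must be the root context itself, contradicting $\Gamma(x)\leq1<2\leq\Gamma_s(x)$. For \rulename{L-Abs} with binder $y$ and premise context $\Gamma\evp\{y\}$ (with $y\notin\Gamma$), the premise's $x$-count is either $\Gamma(x)$ (if $y\neq x$) or $\Gamma(x)+1\leq1$ (if $y=x$, using $\Gamma(x)=0$); either way it is at most $1$, so the offending subderivation lies strictly inside the premise and the IH applies. For \rulename{L-Pair}, the premise contexts $\Gamma_1,\Gamma_2$ satisfy $\Gamma_i(x)\leq(\Gamma_1\evp\Gamma_2)(x)\leq1$; the subderivation with count~$\geq 2$ must sit in one premise, and the IH yields a witnessing $\cdup$ subterm there. \rulename{L-Choice} is immediate since both premises share the root context. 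For \rulename{L-Drop}, the premise context $\Gamma_1$ has $\Gamma_1(x)\leq(\Gamma_1\evp\Gamma_2)(x)\leq1$, and again the IH applies.

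The crux is \rulename{L-Dup}, where the conclusion is $\cwf{\Gamma_1\evp\Gamma_2}{\cdup{\Gamma_2}{ae}}$ with premise $\cwf{\Gamma_1\evp\Gamma_2\evp\Gamma_2}{ae}$. Split on whether $x\in\Gamma_2$: if yes, the enclosing term $\cdup{\Gamma_2}{ae}$ is itself the desired subterm with $x\in\Gamma'=\Gamma_2$, and we are done; if no, then the premise's $x$-count equals $\Gamma_1(x)=(\Gamma_1\evp\Gamma_2)(x)\leq1$, and the IH applied to the premise delivers the witness.

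The main obstacle is careful bookkeeping of the multiset arithmetic at each rule, together with the observation---baked into \rulename{L-Dup}---that the only mechanism capable of lifting a count from $\leq1$ to $\geq2$ is precisely the rule whose side effect we are trying to exhibit. Once that observation is made, every case either preserves the hypothesis $\Gamma(x)\leq1$ for an appeal to the IH on a smaller derivation, or else directly produces the required $\cdup$ subterm from the current rule.
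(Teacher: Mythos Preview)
Your argument is correct. The induction on the well-formedness derivation is the natural approach, and your case analysis is sound; in particular, the \rulename{L-Abs} side condition $y\notin\Gamma$ is exactly what you need to cap the premise's $x$-count at $1$, and your split in the \rulename{L-Dup} case on whether $x\in\Gamma_2$ is the key observation. One minor point of presentation: in every inductive case you implicitly rely on the fact that the offending subderivation cannot be the root itself (since $\Gamma_s=\Gamma$ would contradict $\Gamma(x)\leq1<2\leq\Gamma_s(x)$); you spell this out for \rulename{L-Var} but leave it tacit elsewhere, and it would be cleaner to note it once up front.

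As for comparison with the paper: the paper does not actually supply a proof of this lemma. It states lemmas~\ref{lem:fdrop} and~\ref{lem:fdup} without proof and refers the reader to Gan's thesis for details. Your induction on the derivation is the expected argument and almost certainly coincides with what the thesis contains.
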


\begin{lem}[No Unnecessary Drops]
Let $ae=\inferf\left(e\right)$. If ae contains
a subterm $\cdrop{\Gamma_{d}}{ae_{s}}$ with $x\in\Gamma_{d}$ then
any other well-formed $ae'$ with $\erase\left(ae'\right)=e$ contains
a subterm $\cdrop{\Gamma_{d}^{'}}{ae_{s}^{'}}$ with $x\in\Gamma_{d}^{'}$.
\label{lem:unnecdrop}\end{lem}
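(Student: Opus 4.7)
The plan is to induct on the structure of $e$, with a case analysis on where the drop $\cdrop{\Gamma_d}{ae_s}$ in $\inferf(e)$ was introduced. Every drop produced by the algorithm appears in one of two positions: either as the wrapper $\cdrop{x}{\cdot}$ inserted under a binder $\clam x{\cdot}$ when $x\notin\fv(e'')$, or as one of the two branch wrappers in a $\cchoice{e_1}{e_2}$ subterm. Any other drop must arise inside a recursive call $\inferf(e_i)$ on a proper subterm $e_i$.

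If the drop lives inside a recursive call $\inferf(e_i)$, I will apply the inductive hypothesis to $e_i$ to conclude that any well-formed annotation of $e_i$ contains a drop of $x$. To transfer this conclusion back to $ae'$, I will rely on the fact that $\erase$ preserves the variable/lambda/pair/choice skeleton of an annotated term (it only discards $\mathbf{dup}$ and $\mathbf{drop}$ wrappers). Thus by peeling off the outer wrappers in $ae'$, one can identify a subterm $ae_i'$ with $\erase(ae_i')=e_i$ whose well-formedness is inherited from that of $ae'$; the inductive hypothesis then produces the required drop inside $ae_i'\subseteq ae'$.

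For the base cases, I consider the two algorithmic insertion points. First, if $e=\clam x{e''}$ with $x\notin\fv(e'')$, then any $ae'$ with $\erase(ae')=e$ contains, after stripping outer wrappers, a subterm $\clam x{ae''}$ with $\erase(ae'')=e''$; rule \rulename{L-Abs} ensures $ae''$ is well-formed under a context containing $x$, while $x\notin\fv(\erase(ae''))$ by assumption, so Lemma~\ref{lem:fdrop} (Forced Drop) supplies the desired $\cdrop{\Gamma'}{\cdot}$ subterm. Second, if $e=\cchoice{e_1}{e_2}$ and, without loss of generality, $x\in\fv(e_2)\setminus\fv(e_1)$, then the matching subterm of $ae'$ has the form $\cchoice{ae_1'}{ae_2'}$ with a common context $\Gamma$ by \rulename{L-Choice}. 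Since $x\in\fv(e_2)\subseteq\fv(ae_2')\subseteq\Gamma$ (using Lemma~\ref{lem:minctx} together with the observation that erasure never creates binders, so $\fv(\erase(ae))\subseteq\fv(ae)$), we have $\Gamma(x)\geq1$, while $x\notin\fv(e_1)=\fv(\erase(ae_1'))$. Forced Drop applied to $ae_1'$ then produces the required subterm.

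I expect the trickiest step to be the choice case, where I must carefully pin down that $x$ lies in the shared context $\Gamma$ of both branches. This hinges on two small preliminaries that I will establish en route: that $\fv(\erase(ae))\subseteq\fv(ae)$ (since erasure removes annotations but never introduces binders), and that Minimal Contexts lifts free-variable membership to the ambient well-formedness context. Once these are in hand, Forced Drop closes each case uniformly and the induction goes through mechanically.
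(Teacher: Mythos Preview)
Your proposal is correct and follows the route the paper implicitly lays out: the paper does not give a proof of this lemma in-text (it defers the details to~\cite{ganthesis}), but it explicitly sets up Lemma~\ref{lem:fdrop} (Forced Drop) and Lemma~\ref{lem:minctx} (Minimal Contexts) as the key ingredients, and your argument uses precisely these in the expected way. The structural induction on $e$, the case split on where the algorithm inserted the drop (under a $\lambda$-binder, in a $\&$-branch wrapper, or recursively), and the peeling of outer $\mathbf{dup}/\mathbf{drop}$ wrappers in $ae'$ to expose a well-formed sub-annotation of the relevant $e_i$ are all the intended moves.

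One minor simplification: since $\mathbf{dup}$ and $\mathbf{drop}$ in $\cll$ bind no variables, you actually have $\fv(\erase(ae))=\fv(ae)$ rather than merely $\subseteq$, which makes the choice case slightly cleaner than you suggest. Otherwise there is nothing to add.
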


\begin{lem}[No Unnecessary Dups]
Let $ae=\inferf\left(e\right)$. If ae contains
a subterm $\cdup{\Gamma_{d}}{ae_{s}}$ with $x\in\Gamma_{d}$ then
any other $ae'$ with $\erase\left(ae'\right)=e$ and $\cwf{\Gamma^{'}}{ae'}$
for $\Gamma^{'}\left(x\right)\leq1$ contains a subterm $\cdup{\Gamma_{d}^{'}}{ae_{s}^{'}}$
with $x\in\Gamma_{d}^{'}$.\label{lem:unnecdup}\end{lem}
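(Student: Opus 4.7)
The plan is to reduce to Lemma~\ref{lem:fdup} (Forced dup). First I would inspect the definition of $\inferf$ and observe that the only clause that introduces a $\cdup$ node is the multiplicative case
\[
  \inferf(\cpair{e_1}{e_2}) = \cdup{\fv(e_1)\cap\fv(e_2)}{\cpair{\inferf(e_1)}{\inferf(e_2)}}.
\]
Hence the hypothesized subterm $\cdup{\Gamma_d}{ae_s}$ of $\inferf(e)$ must sit over some pair subterm $\cpair{e_1}{e_2}$ of $e$, with $\Gamma_d = \fv(e_1)\cap\fv(e_2)$; in particular $x\in\fv(e_1)$ and $x\in\fv(e_2)$.

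Next, fix an arbitrary $ae'$ with $\erase(ae')=e$ and $\cwf{\Gamma'}{ae'}$, where $\Gamma'(x)\leq 1$. Since $\erase$ only deletes $\cdup$ and $\cdrop$ wrappers, $ae'$ must contain a subterm of the form $\cpair{ae_1'}{ae_2'}$ with $\erase(ae_i')=e_i$ for $i=1,2$. By \rulename{L-Pair}, the well-formedness derivation for this subterm has conclusion $\cwf{\Gamma_1'\evp\Gamma_2'}{\cpair{ae_1'}{ae_2'}}$ with premises $\cwf{\Gamma_i'}{ae_i'}$.

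To apply Forced dup at this subderivation I need $(\Gamma_1'\evp\Gamma_2')(x)\geq 2$. For that I would prove a short auxiliary containment property: whenever $\cwf{\Delta}{ae}$, every $y\in\fv(\erase(ae))$ satisfies $\Delta(y)\geq 1$. This goes by a routine induction on the well-formedness derivation; the only cases worth checking are \rulename{L-Abs} (where both the free-variable set and the context shed the bound variable) and \rulename{L-Dup} (where the induction hypothesis places $y$ in $\Gamma_1\evp\Gamma_2\evp\Gamma_2$, hence in $\Gamma_1\evp\Gamma_2$). Applying this property to each $ae_i'$ gives $\Gamma_i'(x)\geq 1$, so $(\Gamma_1'\evp\Gamma_2')(x)\geq 2$. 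Since $\Gamma'(x)\leq 1$, Lemma~\ref{lem:fdup} then produces a subterm $\cdup{\Gamma_d'}{ae_s'}$ of $ae'$ with $x\in\Gamma_d'$, as required.

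The main obstacle is proving the auxiliary containment lemma cleanly; once it is in hand, the rest of the argument is simply the case analysis on $\inferf$ to identify the forcing pair and a direct appeal to Forced dup. The $\cchoice$ and $\clam$ clauses of $\inferf$ need no separate treatment, since they insert only drops and so cannot explain the hypothesized $\cdup{\Gamma_d}{ae_s}$ in $\inferf(e)$.
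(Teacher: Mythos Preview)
Your argument is correct and matches the route the paper sets up: the sequence of auxiliary lemmas (Minimal contexts, Forced dup) is stated precisely so that No Unnecessary Dups falls out by locating the pair subterm responsible for the inserted $\cdup$ and then invoking Lemma~\ref{lem:fdup}. One simplification: your ``auxiliary containment property'' is already available---it is Lemma~\ref{lem:minctx} together with the paper's remark that $\fv(e)\evsub\fv(ae)$ whenever $\erase(ae)=e$, so you can cite that directly rather than reproving it by induction.
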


\subsection{Constraint processing.}

After the {dup} and {drop} insertion pass, type
inference can proceed without needing to count variable
usages or split contexts, since the insertion pass has made
every {dup} and {drop} operation explicit. With the exception
of the extra constraints imposed on closure environments, 
inferring types for the Clamp internal language is like
inferring types for Haskell. For the constraint solver,
the type classes $\tcdup$ and $\tcdrop$ and their instances
are no different than any other type class.

Type checking in this system is thus separated into two self-contained steps:
first, usage analysis as performed by elaboration, and second,
checking substructural constraints in the
same manner as any other type class system.
A similar division was used
by de~Vries et al.\ to integrate a uniqueness typing system into
Damas--Hindley--Milner~\cite{uniqueness-simple}.

In Table~\ref{tab:Type-Checker-Code} we present the sizes of the components of
our Clamp implementation. It compares favorably to the implementation of
languages such as Alms \cite{practical-affine}, whose type inference engine is
15,000 lines of Haskell. The dup/drop insertion sits on top of the stack and
is the main addition we have had to make to a Haskell type checker design.
Besides that, we have included the base set of type class instances and
altered arrow kinds throughout.

\begin{table}
\begin{center}
\begin{tabular}{l r}
Component & Lines of code\tabularnewline
\midrule
Dup/drop insertion & 160\tabularnewline
Type class instances & 60\tabularnewline
\midrule
Syntax and Types & 703\tabularnewline
Parser/Lexer & 319\tabularnewline
Unification Engine & 373\tabularnewline
\end{tabular}
\end{center}

\caption{Type checker code breakdown\label{tab:Type-Checker-Code}}
\end{table}

\section{Related Work}

Of the existing work in linear type systems, we will focus here on those which
develop general purpose polymorphic linear types. Research on the mathematical
expressiveness of linearity \cite{recursor} or more tailored use cases
\cite{sessiongay} for instance often do not aim at broad usability and
polymorphism.

The first linear type systems derive directly from intuitionistic linear
logic, and use the exponential~``!'' to indicate types that support structural
operations \cite{computationabramsky,ill}. Some later type systems, in order
to support parametric polymorphism over linearity, replace ``!'' with types
composed of a qualifier and a pretype \cite{stepindexed,atapl}, so that all
types in these languages have a form like~$^{q}\overline\tau$. Similarly, the
Clean programming language makes use of qualifier variables and inequalities
to capture a range of substructural polymorphism 
\cite{uniqueness, uniqueness-simple}. 
Though Clean uses uniqueness rather than linear types, many of its
design decisions can be applied in a linear settings as well.

More recent languages such as Alms \cite{practical-affine} and $F^{\circ}$
\cite{fpop} eliminate the notational overhead of annotating every type with
substructural qualifiers by using distinct kinds to separate substructural
types. Thus, rather than working with types like $^{\tA}\textsf{file}$, a
\textsf{file} type in Alms can be defined to have kind $\tA$. Like Clean, Alms
is highly polymorphic, but it makes use of compound qualifier expressions on
function types, as well as dependent kinds and sub-kinding.

Compared to type systems like those of Clean and Alms, we believe Clamp offers
advantages in simplicity and extensibility. Like Alms and $F^{\circ}$, Clamp
avoids the burden in Clean of annotating every  type with a qualifier. Type
classes themselves are a general and powerful feature; for a language that is
going to have type classes anyway, the Clamp approach allows adding the full
spectrum of URAL types with little additional complexity for programmers and
type checkers. Programmers already familiar with type classes will be well
prepared to understand Clamp-style substructural types.

Further, type classes provide a clean formalism for constraining state-aware
datatypes such as the system of weak and strong mutable references found in
Clamp (\S\ref{sec:instances}). Finally, we anticipate that user-defined |Dup|
and |Drop| instances, not yet supported by Clamp, will allow defining custom
destructors and copy constructors, which should enable a variety of resource
management strategies.

However, compared to Alms and Clean, Clamp does not provide as much
polymorphism because each arrow is assigned a concrete qualifier.
Consider, for instance, a \V{curry} function in Clamp.
Unlike in Alms or Clean, Clamp requires different versions for different
desired structural properties.
For instance, two possible type schemes for a \V{curry} function are
\begin{code}
    (Dup /V{a}, Drop /V{a}) => ((/V{a}, /V{b}) -U> /V{c}) -U> /V{a} -U> /V{b} -U> /V{c}
\end{code}
and
\begin{code}
    ((/V{a}, /V{b}) -L> /V{c}) -U> /V{a} -L> /V{b} -L> c/rm.
\end{code}
We believe that
extending Clamp with qualifier variables and type class
implications could increase its expressiveness to the point where
\V{curry} has a principal typing.

\section{Future Work\label{sec:future}}

\subsection{Custom \lstinline!dup! and \lstinline!drop!}

In Clamp's current design, the semantics of dup and drop are fixed. Allowing
programmers instead to define their own implementations of dup and drop on
user-defined types would enable scenarios similar to those possible in C++ via
copy constructors and destructors.  Programmers could then define data types
that automatically manage resources in ways that meet particular needs; but
unlike in C++, we believe this could be done without unsafe operations.

Programmers could define types and instances to manage their memory in
whatever way is most appropriate, for instance by choosing between deep-copy
and shallow-copy dup operations (or perhaps some hybrid approach), or between
eager and lazy drop operations. For such a system to be practical, it is
important that the dup/drop-insertion algorithm be easy to understand, since
the insertion of dups and drops can affect the dynamic semantics of the
language.

\subsection{Polymorphic Arrows}

In most cases, Clamp allows programmers to define functions that
are inherently polymorphic over the substructural properties of their
arguments. In these cases, a function that uses one of its arguments
linearly can accept any type, whether it satisfies Dup or Drop, for
that argument. However, this is not the case for function types,
which are annotated with fixed qualifiers determined at the function
definition point.

Alms is able to accommodate more polymorphic arrow types by introducing a
subtyping relation on qualified arrows \cite{practical-affine}, and it refines
the types of arrows further by introducing usage qualifiers that depend on
the substructural properties of type variables. This allows one to write a
function whose qualifier is inferred from the closure environment and inherits
any polymorphism present in that environment.

There are many options for increasing the polymorphic expressiveness of
Clamp without resorting to the complexities of subtyping. One could make
qualifiers into first class types and allow quantification over
qualifiers in arrow types. This is implemented in an ad-hoc way in our
current type checker. 

The idea of expanding the language of qualifiers could also be
profitable in Clamp. This often means annotating arrow types with the types
of their closure environments, and in a sense assigning them a
closure-converted type. The dup and drop instances for arrows could then
use the closure environment types to determine the arrow type's
substructural properties.

\subsection{Implementation}

The current implementation of the Clamp type checker reflects $\lcl$, and
could benefit from the addition of some standard language features found in
full Haskell. In particular, the addition of algebraic datatypes, user-defined
instance rules, and a module system would allow programmers to define
libraries that expose custom types with varying substructural properties.

It would also be interesting to implement a compiler for Clamp that
takes advantage of substructural properties for reference-counted memory
management \cite{computational-linear}. Eagerly reusing the
storage occupied by linear and affine values may offer particular performance
advantages, and
substructural analyses can enable a variety of other optimizations
as well \cite{linquick, onceuponpolymorphic}.

\section{Conclusion}

Clamp introduces techniques that make it easier and more desirable to add
substructural types to functional programming languages. The external /
internal language distinction gives us both a programmer friendly syntax and a
direct path to type inference. The $\tcdup$ and $\tcdrop$ classes
also support polymorphism over the URAL lattice and can represent state aware types such
as strong and weak references. Type classes are an expressive and 
well-established language feature, and Clamp shows that they can serve as a
base for substructural types.





\bibliographystyle{eptcs}

\bibliography{Linear}

\begin{thebibliography}{10}
\providecommand{\bibitemdeclare}[2]{}
\providecommand{\surnamestart}{}
\providecommand{\surnameend}{}
\providecommand{\urlprefix}{Available at }
\providecommand{\url}[1]{\texttt{#1}}
\providecommand{\href}[2]{\texttt{#2}}
\providecommand{\urlalt}[2]{\href{#1}{#2}}
\providecommand{\doi}[1]{doi:\urlalt{http://dx.doi.org/#1}{#1}}
\providecommand{\bibinfo}[2]{#2}

\bibitemdeclare{article}{computationabramsky}
\bibitem{computationabramsky}
\bibinfo{author}{Samson \surnamestart Abramsky\surnameend}
  (\bibinfo{year}{1993}): \emph{\bibinfo{title}{Computational Interpretations
  of Linear Logic}}.
\newblock {\sl \bibinfo{journal}{Theor. Comput. Sci.}}
  \bibinfo{volume}{111}(\bibinfo{number}{1-2}),
  \doi{10.1016/0304-3975(93)90181-R}.

\bibitemdeclare{inproceedings}{stepindexed}
\bibitem{stepindexed}
\bibinfo{author}{Amal \surnamestart Ahmed\surnameend}, \bibinfo{author}{Matthew
  \surnamestart Fluet\surnameend} \& \bibinfo{author}{Greg \surnamestart
  Morrisett\surnameend} (\bibinfo{year}{2005}): \emph{\bibinfo{title}{A
  Step-Indexed Model of Substructural State}}.
\newblock In: {\sl \bibinfo{booktitle}{Proc.~10th ACM SIGPLAN International
  Conference on Functional Programming (ICFP'05)}},
  \doi{10.1145/1086365.1086376}.

\bibitemdeclare{inproceedings}{recursor}
\bibitem{recursor}
\bibinfo{author}{Sandra \surnamestart Alves\surnameend},
  \bibinfo{author}{Maribel \surnamestart Fern\'{a}ndez\surnameend},
  \bibinfo{author}{M\'{a}rio \surnamestart Florido\surnameend} \&
  \bibinfo{author}{Ian \surnamestart Mackie\surnameend} (\bibinfo{year}{2011}):
  \emph{\bibinfo{title}{Linearity and Recursion in a Typed Lambda-calculus}}.
\newblock In: {\sl \bibinfo{booktitle}{Proceedings of the 13th International
  ACM SIGPLAN Symposium on Principles and Practices of Declarative
  Programming}}, \bibinfo{series}{PPDP '11}, \bibinfo{publisher}{ACM},
  \bibinfo{address}{New York, NY, USA}, pp. \bibinfo{pages}{173--182},
  \doi{10.1145/2003476.2003500}.

\bibitemdeclare{article}{linquick}
\bibitem{linquick}
\bibinfo{author}{Henry~G. \surnamestart Baker\surnameend}
  (\bibinfo{year}{1994}): \emph{\bibinfo{title}{A ``linear logic'' Quicksort}}.
\newblock {\sl \bibinfo{journal}{SIGPLAN Not.}}
  \bibinfo{volume}{29}(\bibinfo{number}{2}), pp. \bibinfo{pages}{13--18},
  \doi{10.1145/181748.181750}.

\bibitemdeclare{inproceedings}{uniqueness}
\bibitem{uniqueness}
\bibinfo{author}{Erik \surnamestart Barendsen\surnameend} \&
  \bibinfo{author}{Sjaak \surnamestart Smetsers\surnameend}
  (\bibinfo{year}{1996}): \emph{\bibinfo{title}{Uniqueness Typing for
  Functional Languages with Graph Rewriting Semantics}}.
\newblock In: {\sl \bibinfo{booktitle}{Math. Struct. Comp. Sci.}}

\bibitemdeclare{phdthesis}{ill}
\bibitem{ill}
\bibinfo{author}{Gavin~M. \surnamestart Bierman\surnameend}
  (\bibinfo{year}{1993}): \emph{\bibinfo{title}{On Intuitionistic Linear
  Logic}}.
\newblock Ph.D. thesis, \bibinfo{school}{University of Cambridge}.

\bibitemdeclare{article}{computational-linear}
\bibitem{computational-linear}
\bibinfo{author}{Jawahar \surnamestart Chirimar\surnameend},
  \bibinfo{author}{Carl~A. \surnamestart Gunter\surnameend} \&
  \bibinfo{author}{Jon~G. \surnamestart Riecke\surnameend}
  (\bibinfo{year}{1996}): \emph{\bibinfo{title}{Reference Counting as a
  Computational Interpretation of Linear Logic}}.
\newblock {\sl \bibinfo{journal}{Journal of Functional Programming}}
  \bibinfo{volume}{6}, pp. \bibinfo{pages}{6--2},
  \doi{10.1017/S0956796800001660}.

\bibitemdeclare{inproceedings}{damas82}
\bibitem{damas82}
\bibinfo{author}{Luis \surnamestart Damas\surnameend} \& \bibinfo{author}{Robin
  \surnamestart Milner\surnameend} (\bibinfo{year}{1982}):
  \emph{\bibinfo{title}{Principal Type-Schemes for Functional Programs}}.
\newblock In: {\sl \bibinfo{booktitle}{Proc.~9th Annual ACM Symposium on
  Principles of Programming Languages (POPL'82)}}, \doi{10.1145/582153.582176}.

\bibitemdeclare{techreport}{ganthesis}
\bibitem{ganthesis}
\bibinfo{author}{Edward \surnamestart Gan\surnameend} (\bibinfo{year}{2013}):
  \emph{\bibinfo{title}{Clamp: Type Classes for Substructural Types}}.
\newblock \bibinfo{type}{Senior Thesis}, \bibinfo{institution}{Harvard
  University}.
\newblock \urlprefix\url{http://edgan8.github.io/}.

\bibitemdeclare{article}{sessiongay}
\bibitem{sessiongay}
\bibinfo{author}{Simon~J. \surnamestart Gay\surnameend} \&
  \bibinfo{author}{Vasco~T. \surnamestart Vasconcelos\surnameend}
  (\bibinfo{year}{2010}): \emph{\bibinfo{title}{Linear type theory for
  asynchronous session types}}.
\newblock {\sl \bibinfo{journal}{Journal of Functional Programming}}
  \bibinfo{volume}{20}, pp. \bibinfo{pages}{19--50},
  \doi{10.1017/S0956796809990268}.

\bibitemdeclare{phdthesis}{girardF}
\bibitem{girardF}
\bibinfo{author}{Jean-Yves \surnamestart Girard\surnameend}
  (\bibinfo{year}{1972}): \emph{\bibinfo{title}{Interpr{\'e}tation
  fonctionnelle et Elimination des coupures de l'arithm{\'e}tique d'ordre
  sup{\'e}rieur}}.
\newblock \bibinfo{type}{These d'{\'e}tat}, \bibinfo{school}{Universit{\'e} de
  Paris 7}, \doi{10.1016/0304-3975(87)90045-4}.

\bibitemdeclare{book}{qualtypes}
\bibitem{qualtypes}
\bibinfo{author}{Mark~P. \surnamestart Jones\surnameend}
  (\bibinfo{year}{1995}): \emph{\bibinfo{title}{Qualified Types: Theory and
  Practice}}.
\newblock \bibinfo{publisher}{Cambridge University Press},
  \bibinfo{address}{New York}.

\bibitemdeclare{inproceedings}{typinghaskell}
\bibitem{typinghaskell}
\bibinfo{author}{Mark~P. \surnamestart Jones\surnameend}
  (\bibinfo{year}{1999}): \emph{\bibinfo{title}{Typing {Haskell} in
  {Haskell}}}.
\newblock In: {\sl \bibinfo{booktitle}{Proc.~1999 Haskell Workshop}}.

\bibitemdeclare{inproceedings}{fpop}
\bibitem{fpop}
\bibinfo{author}{Karl \surnamestart Mazurak\surnameend},
  \bibinfo{author}{Jianzhou \surnamestart Zhao\surnameend} \&
  \bibinfo{author}{Steve \surnamestart Zdancewic\surnameend}
  (\bibinfo{year}{2010}): \emph{\bibinfo{title}{Lightweight Linear Types in
  {System~F}$^\circ$}}.
\newblock In: {\sl \bibinfo{booktitle}{Proc.~5th ACM SIGPLAN Workshop on Types
  in Language Design and Implementation}}, \doi{10.1145/1708016.1708027}.

\bibitemdeclare{unpublished}{haskell98}
\bibitem{haskell98}
\bibinfo{author}{Simon \surnamestart {Peyton Jones}\surnameend} \&
  \bibinfo{author}{John \surnamestart Hughes\surnameend, ed.}
  (\bibinfo{year}{1999}): \emph{\bibinfo{title}{Haskell 98: A Non-Strict,
  Purely Functional Language}}.

\bibitemdeclare{phdthesis}{tovthesis}
\bibitem{tovthesis}
\bibinfo{author}{Jesse~A. \surnamestart Tov\surnameend} (\bibinfo{year}{2012}):
  \emph{\bibinfo{title}{Practical Programming with Substructural Types}}.
\newblock Ph.D. thesis, \bibinfo{school}{Northeastern University}.

\bibitemdeclare{inproceedings}{practical-affine}
\bibitem{practical-affine}
\bibinfo{author}{Jesse~A. \surnamestart Tov\surnameend} \&
  \bibinfo{author}{Riccardo \surnamestart Pucella\surnameend}
  (\bibinfo{year}{2011}): \emph{\bibinfo{title}{Practical Affine Types}}.
\newblock In: {\sl \bibinfo{booktitle}{Proc.~38th Annual ACM SIGPLAN-SIGACT
  Symposium on Principles of Programming Languages (POPL'11)}},
  \doi{10.1145/1926385.1926436}.

\bibitemdeclare{incollection}{uniqueness-simple}
\bibitem{uniqueness-simple}
\bibinfo{author}{Edsko \surnamestart Vries\surnameend}, \bibinfo{author}{Rinus
  \surnamestart Plasmeijer\surnameend} \& \bibinfo{author}{David~M.
  \surnamestart Abrahamson\surnameend} (\bibinfo{year}{2008}):
  \emph{\bibinfo{title}{Implementation and Application of Functional
  Languages}}.
\newblock chapter \bibinfo{chapter}{Uniqueness Typing Simplified},
  \bibinfo{publisher}{Springer-Verlag}, \bibinfo{address}{Berlin, Heidelberg},
  pp. \bibinfo{pages}{201--218}, \doi{10.1007/978-3-540-85373-2\_12}.

\bibitemdeclare{inproceedings}{wadleradhoc}
\bibitem{wadleradhoc}
\bibinfo{author}{Philip \surnamestart Wadler\surnameend} \&
  \bibinfo{author}{Stephen \surnamestart Blott\surnameend}
  (\bibinfo{year}{1989}): \emph{\bibinfo{title}{How to Make Ad-Hoc Polymorphism
  Less Ad Hoc}}.
\newblock In: {\sl \bibinfo{booktitle}{Proc.~16th ACM SIGPLAN-SIGACT Symposium
  on Principles of Programming Languages (POPL'89)}},
  \doi{10.1145/75277.75283}.

\bibitemdeclare{inbook}{atapl}
\bibitem{atapl}
\bibinfo{author}{David \surnamestart Walker\surnameend} (\bibinfo{year}{2005}):
  \emph{\bibinfo{title}{Advanced Topics in Types and Programming Languages}},
  chapter~\bibinfo{chapter}{1}.
\newblock \bibinfo{publisher}{MIT Press}, \bibinfo{address}{Cambridge, Mass.,
  U.S.A.}

\bibitemdeclare{inproceedings}{onceuponpolymorphic}
\bibitem{onceuponpolymorphic}
\bibinfo{author}{Keith \surnamestart Wansbrough\surnameend} \&
  \bibinfo{author}{Simon \surnamestart Peyton~Jones\surnameend}
  (\bibinfo{year}{1999}): \emph{\bibinfo{title}{Once upon a Polymorphic Type}}.
\newblock In: {\sl \bibinfo{booktitle}{Proc.~26th ACM SIGPLAN-SIGACT Symposium
  on Principles of Programming Languages (POPL'99)}}, \bibinfo{publisher}{ACM},
  pp. \bibinfo{pages}{15--28}, \doi{10.1145/292540.292545}.

\end{thebibliography}
\end{document}